\documentclass[11pt]{article}

\usepackage{graphicx}
\usepackage{subcaption}
\usepackage[a4paper,top=2cm,bottom=2cm,left=3cm,right=3cm,marginparwidth=1.75cm]{geometry}
\usepackage{amsmath}
\usepackage{amssymb}
\usepackage{physics}
\usepackage{bm}
\usepackage{empheq}
\graphicspath{{images/}}
\usepackage{amsfonts}
\usepackage{epstopdf}
\usepackage{algorithmic}
\usepackage{amsopn}

\usepackage{amsthm}

\theoremstyle{plain}
\newtheorem{theorem}{Theorem}
\newtheorem{lemma}{Lemma}
\newtheorem{claim}{Claim}

\theoremstyle{definition}
\newtheorem{definition}{Definition}
\newtheorem{remark}{Remark}

\usepackage{cleveref}

\crefname{theorem}{theorem}{theorems}
\Crefname{theorem}{Theorem}{Theorems}

\crefname{lemma}{lemma}{lemmas}
\Crefname{lemma}{Lemma}{Lemmas}

\crefname{definition}{definition}{definitions}
\Crefname{definition}{Definition}{Definitions}

\crefname{remark}{remark}{remarks}
\Crefname{remark}{Remark}{Remarks}

\crefname{claim}{claim}{claims}
\Crefname{claim}{Claim}{Claims}

\title{Integral representation of time-harmonic solutions to Maxwell's equations with fast numerical convergence \thanks{This work was funded by the Air Force Office of Scientific Research under the grant no. FA9550-23-1-0093 and the AFOSR-MURI program FA9550-25-1-0262.}}

\author{Kalpesh Jaykar \thanks{Department of Aerospace Engineering and Mechanics, University of Minnesota, Minneapolis, MN (jayka001@umn.edu, james@umn.edu)} \and Richard D. James\footnotemark[2]}

\begin{document}

\maketitle

\pagenumbering{arabic}	

\begin{abstract}
    The robustness of XRD methods for the determination of the lattice parameters of crystals is well established. These methods have been extended to helical atomic structures using twisted x-rays \cite{friesecke_twisted_2016}. Building on an integral form
    used in \cite{friesecke_twisted_2016}, we construct integral representations of a broad class of time-harmonic solutions to Maxwell's equations in a vacuum or, more generally, in a homogeneous medium without source terms. The representation includes assignable generalized functions (distributions) that can be tailored to specific boundary or far-field conditions.  When the assignable functions satisfy mild periodicity and smoothness conditions, the solutions can be approximated using multi-dimensional trapezoidal rules with exponentially fast convergence. This approximation can be physically interpreted as utilizing finite sources of plane waves to approximate the broad class of time-harmonic solutions to Maxwell's equations. Using these solutions, we show that radiation from suitably placed and oriented sources can serve as incoming radiation for structures with icosahedral symmetry to achieve constructive interference after interacting with the icosahedral structure. The finite source approximations are sufficiently general to satisfy the general Dirichlet conditions at an arbitrarily large number of assigned locations in a source-free domain. The integral representation also extends to a broad class of physical phenomena governed by Helmholtz-type equations. Examples include the scalar wave equation for acoustic waves and elastic wave propagation in linear isotropic solids, which involve both scalar and vector wave equations. 
\end{abstract}

\textbf{Keywords :}Maxwell's equations, trapezoidal rule, x-ray diffraction, icosahedral symmetry.

\textbf{MSC codes :}35J05, 35Q61, 65D30.

\section{Introduction}
Solving ordinary and partial differential equations (ODEs and PDEs) is central to modeling physical phenomena in engineering and physics.  Among the most important PDEs in electromagnetism are Maxwell's equations, which encapsulate the behavior of electric and magnetic fields and are fully consistent with the special theory of relativity.

A  foundational solution to Maxwell's equations is the plane wave: a solution in which the electric and magnetic fields are uniform across planes perpendicular to the direction of propagation with no free charges or currents in the medium \cite{Fry_PlaneWO_1927, Burrows_1965}. Plane waves are classical and widely studied solutions to Maxwell's equations in a vacuum or, more generally, in a homogeneous medium without sources. Normally, exploiting linearity,  we superpose plane waves traveling in orthonormal directions to approximate quite general
radiation, i.e., Fourier methods.  

But it is equally possible by linearity to superpose plane waves traveling in rather general non-orthonormal directions to obtain
solutions of Maxwell's equations. Such solutions are generally referred to as the angular spectrum \cite{morse_methods_1953, clemmow_plane_2013, Friberg_83}. These solutions can also equivalently be represented as a plane wave spectrum in terms of a field along a specific plane \cite{Hansen_Yaghijan_1999}.   
Typically, they are expressed in a specific Cartesian coordinate system and often rely on spatial Fourier transforms. Also, in some of these representations, orthogonality conditions may be required to satisfy the divergence-free constraint.

In this paper we generalize the latter by providing an explicit integral form of a solution of the
time-harmonic Maxwell's equations using  rotation matrices and arbitrary, assignable functions of the integration variables. Our formulation is fundamentally different from the Fourier-based approach and satisfies Maxwell's equations without imposing constraints on the assignable functions.  Also, the divergence-free
condition is automatically satisfied.
The explicit formulation and its potential applications are presented in \ref{sec:GeneralizationOfTwistedXrays}.

Our results generalize the integral form of twisted X-rays \cite{friesecke_twisted_2016}. Twisted X-rays are solutions to certain ``design equations" associated with helical symmetry, and they can be used as incoming radiation to determine the detailed atomic structures of molecules exhibiting helical symmetry \cite{friesecke_twisted_2016}. A summary of this work is provided in \Cref{sec:TwistedXRays}. Twisted X-rays can be interpreted as a superposition of infinitely many plane waves generated by rotating a plane wave around an oblique axis. Our integral form builds on this idea.   

While general superpositions of plane waves in arbitrary directions seem at first to nullify the main simplifying features of Fourier methods, other vast simplifications become possible.
Various numerical methods can be used to approximate our resulting integral form. If the integrand satisfies certain mild conditions, the computationally inexpensive trapezoidal rule demonstrates exponential convergence \cite{trefethen_exponentially_2014} compared to generic quadrature rules for approximating the integral. The fast convergence means that relatively  few plane waves are need to achieve 
acceptable error.  Moreover, the approximation has practical implications for the physical realization of the radiation. Since each term in the approximation is a classical plane wave, it is possible to generate the desired radiation, given by the integral formulation, using standard plane wave sources. This leads to a powerful consequence: any time-harmonic solution to Maxwell's equations in a source-free medium can, in principle, be approximated by the radiation of relatively few dipole antennas. We elaborate on this assertion in \Cref{sec:ApproximationOfGeneralization}.  

In summary, we present an integral form that  is
an exact solution of the time-harmonic 
Maxwell equations.  It exhibits extremely fast convergence when approximated by the trapezoidal rule. The terms in this approximation are themselves exact solutions of Maxwell's equations, interpretable as the radiation from synchronized dipole antennas. Thus, the time-harmonic Maxwell's equations are satisfied at 
both the discrete level and for the limit.
The discrete solution can be made to satisfy boundary conditions exactly at an
assignable finite number of boundary 
points.

Finally, since Maxwell's equations reduce to a 3D vector Helmholtz equation, our approach naturally extends to other physical systems governed by the Helmholtz equations. This includes wave propagation problems in 3D elasticity or spherical scalar waves governed by scalar Helmholtz equations. These extensions are discussed in \Cref{sec:OtherRelevantProblem}.

\section{Background : Design equations for twisted X-rays}\label{sec:TwistedXRays}

We begin by reviewing some results from the work of Friesecke et al.\cite{friesecke_twisted_2016}, which form the foundation for the development of this paper. The differential form of Maxwell's equations in vacuum is: 

\begin{align}\label{eq:MaxwellEquationInFreeSpace}
    \div{\mathbf{E}} = 0, \quad
   \div{\mathbf{B}} = 0, \quad
   \curl{\mathbf{E}} = -\frac{\partial \mathbf{B}}{\partial t}, \quad
   \curl{\mathbf{B}} = \frac{1}{c^2} \frac{\partial \mathbf{E}}{\partial t}.
\end{align}

Assuming time-harmonic (monochromatic) fields of the form $\mathbf{E}(\mathbf{x},t) = \mathbf{E}_0(\mathbf{x})e^{-i\omega t}$, $\mathbf{B}(\mathbf{x},t) = \mathbf{B}_0(\mathbf{x})e^{-i\omega t}$, for some $\omega > 0$, the differential equations for the electric and magnetic fields decouple and reduce to:

\begin{subequations}\label{eq:HelmholtzEquation}
\begin{align}
    \Delta \mathbf{E}_0 &= -\frac{\omega^2}{c^2} \mathbf{E}_0, \qquad \div{\mathbf{E}_0} = 0, \label{eq:ElectricField} \\
    \mathbf{B}_0 & = -\frac{i}{\omega} \nabla \times \mathbf{E}_0.\label{eq:MagneticField}
\end{align}
\end{subequations}

Thus the electric field $\mathbf{E}_0$ satisfies the vector Helmholtz equations and divergence-free condition, and the magnetic field $\mathbf{B}_0$ is determined from $\mathbf{E}_0$. 

As discussed earlier, plane waves are foundational solutions to Maxwell's equations and are widely used in conventional X-ray diffraction for analyzing crystal structures. In \cite{friesecke_twisted_2016}, the authors established a connection between plane waves and the symmetry of crystal structures. They show that constructive interference arises when incoming radiation shares the symmetry of crystals, whereas a larger, continuous symmetry of the incoming radiation leads to destructive interference. This relationship is formulated using group-theoretic language: to produce a desired interference pattern, incoming radiation should be a solution to Maxwell's equations, a divergence-free eigenfunction of the Laplacian operator, that is also an eigenfunction of the continuous extension of the group generating the structure.

Let $G_0$ denote the discrete symmetry group associated with the structure — for example, the discrete translation group in the case of a crystal. Let $G$ be the continuous extension of $G_0$ and let $g \in G$. The ``design equations" for the desired incoming radiation are:

\begin{align}
    (g\mathbf{E}_0)(\mathbf{x}) &= \chi_g \mathbf{E}_0(\mathbf{x}),& \label{eq:DesignEquation1}\\
    \Delta \mathbf{E}_0 (\mathbf{x}) &= -\frac{\omega^2}{c^2} \mathbf{E}_0(\mathbf{x}), \quad & \nabla\cdot\mathbf{E}_0(\mathbf{x}) = 0.\label{eq:DesignEquation2}
\end{align}

Here, \Cref{eq:DesignEquation1} states that $\mathbf{E}_0$ is an eigenfunction of the group, with eigenvalue $\chi_g$, and \Cref{eq:DesignEquation2} ensures that $\mathbf{E}_0$ satisfies Maxwell's equations. To interpret the group action 
$g\mathbf{E}_0$, we first define the notation $g=(\mathbf{Q}|\mathbf{c})$ for an element of the Euclidean isometry group, acting on $\mathbf{x} \in \mathbb{R}^3$ by
\begin{align*}
    (\mathbf{Q}|\mathbf{c})\mathbf{x} = \mathbf{Q}\mathbf{x}+\mathbf{c},
\end{align*}
where $\mathbf{Q}$ is orthogonal rotation matrix and $\mathbf{c}\in\mathbb{R}^3$ is a translation vector. The natural action on a vector fields $\mathbf{E_0} : \mathbb{R}^3 \to \mathbb{C}^3$ is $ \left( \left(\mathbf{Q}|\mathbf{c}\right) \mathbf{E}_0\right) \left(\mathbf{x}\right) = \mathbf{Q} \mathbf{E}_0\left( \mathbf{Q}^{-1}\left( \mathbf{x} - \mathbf{c}\right) \right)$. In \cite{friesecke_twisted_2016}, the group $G$ is taken to be the helical group $\mathcal{H}_{\bf{e}}  = \{ (\bf{R}_\theta|\tau\bf{e}) : \theta\in [0,2\pi), \tau \in \mathbb{R} \}$, where $ \bf{R}_{\theta} $ denotes a rotation about the axis $\mathbf{e}$, and $\tau\mathbf{e}$ is a translation along that axis. This is analogous to a screw displacement, and the group product is: 

\begin{align*}
(\mathbf{R}_{\theta_1}|\tau_1\mathbf{e})(\mathbf{R}_{\theta_2}|\tau_2\mathbf{e}) = (\mathbf{R}_{\theta_1 + \theta_2}|(\tau_1+\tau_2)\mathbf{e}).
\end{align*}
The action of this group on the electric field, expressed in cylindrical coordinates $(r,\phi,z)$ with the z-axis aligned with $\mathbf{e}$, is defined as:
    \begin{align*}
        ((\mathbf{R}_{\theta}|\tau\mathbf{e})\mathbf{E}_0)(\mathbf{x}(r,\phi,z)) = \mathbf{R}_\theta\mathbf{E}_0(\mathbf{R}_\theta^T \mathbf{x}(r,\phi,z)-\tau\mathbf{e}) = \mathbf{R}_\theta\mathbf{E}_0(\mathbf{x}(r,\phi-\theta,z-\tau)).
    \end{align*}

The explicit solution to the design equations for the helical group, derived in \cite{friesecke_twisted_2016}, is: 
    \begin{equation*}
        \mathbf{E}_0  (r,\phi,z) = e^{i(\alpha\phi+\beta z)} \bf{R}_\phi \bf{N}(\bf{n}) \begin{pmatrix}  J_{\alpha+1}(\gamma r)  \\ J_{\alpha-1}(\gamma r)  \\ J_{\alpha}(\gamma r) \end{pmatrix} ,
    \end{equation*} 
where $(\alpha,\beta,\gamma)\in\mathbb{Z}\times\mathbb{R}\times(0,\infty)$, $\mathbf{n}\in \mathbb{C}^3$, and the constraint $(0,\gamma,\beta)\cdot\mathbf{n}=0$ ensures the divergence-free condition. Here, $J_\alpha$ denotes the Bessel function of the first kind of order $\alpha$ and $\bf{N}(\bf{n})$ is a $3 \times 3$ matrix depending linearly on $\mathbf{n} \equiv \left(n_1,n_2,n_3\right)$:
    \begin{equation*}
        \bf{N}(\bf{n})= \begin{pmatrix}  \frac{n_1 + i n_2}{2} &\frac{n_1 - i n_2}{2} & 0 \\ \frac{n_2 - i n_1}{2}&\frac{n_2 + i n_1}{2}&0  \\ 0&0&n_3   \end{pmatrix},
    \end{equation*}
The associated frequency of the time-harmonic fields is $\omega = c |(0,\gamma,\beta)|$. Using the integral representation of Bessel functions and appropriate algebraic manipulations, the solution can also be written in the integral form: 

    \begin{equation}\label{eq:twistedXrays}
        \mathbf{{E_0}} (\mathbf{x}) = \frac{1}{2\pi} \int_{-\pi}^{\pi} e^{i\alpha\psi} (\mathbf{R_\psi} \mathbf{n}) e^{i(\mathbf{R_\psi} \mathbf{k}\cdot\bf{x})} \mathrm{d}\psi,
    \end{equation}
where $\mathbf{k} = \left(0, \gamma, \beta\right)$ is a wave vector, and $\mathbf{R}_\psi$ denotes a rotation by angle $\psi$ about the helical axis. The integrand represents a plane wave propagating in the direction $\mathbf{R}_\psi \mathbf{k}$ with polarization $\mathbf{R}_\psi \mathbf{n}$, and phase factor $e^{i \alpha \psi}$. This representation shows that the twisted X-ray field is a continuous superposition of rotated plane waves, with the axis of rotation parallel to the helical axis.

\section{Generalization of twisted X-rays}\label{sec:GeneralizationOfTwistedXrays}

Although twisted X-rays were originally introduced as solutions to the design equations associated with the helical isometry group, their underlying structure — a class of solutions to Maxwell's equations constructed as a superposition of plane waves — can be very useful for solving design equations for other groups and suggests a broader applicability. This naturally raises several questions: Can the principle of superposing plane waves, as seen in twisted X-rays, be extended to more general solutions to Maxwell's equations? What is the most general form of time-harmonic solutions to Maxwell's equations that can be constructed using plane waves as a basis? Does such a construction span the entire space of time-harmonic solutions to Maxwell's equations in free space, or are there any solutions that lie outside this solution space? By a ``general" solution, we refer to a family of solutions parameterized by sufficient degrees of freedom to represent all solutions to the governing differential equations — in this case, Maxwell's equations in free space. In this section, several of these questions will be addressed.

We begin analyzing the integral form of the twisted X-rays given in \Cref{eq:twistedXrays}. The spatial dependence of the solution appears solely in the exponential term of the integrand, $e^{i \mathbf{R}_\psi \mathbf{k} \cdot \mathbf{x}}$, which is an eigenfunction of the Laplacian operator. This observation suggests that plane waves of the form $e^{i \mathbf{k}\cdot\mathbf{x}}$ can serve as a basis for the solutions to the Helmholtz equation — the first equation in \Cref{eq:DesignEquation2} — provided that $\mathbf{k} \cdot \mathbf{k} = \omega^2/c^2$. Since rotations preserve inner products, $\mathbf{R}_\psi \mathbf{k} \cdot \mathbf{R}_\psi \mathbf{k} = \mathbf{k} \cdot \mathbf{k} $, plane waves with rotated wave vectors also satisfy the Helmholtz equation. Therefore, any solution of the form $\mathbf{E}_0(\mathbf{x}) = \sum_{\psi} \mathbf{a}(\psi) e^{i \mathbf{R}_\psi\mathbf{k}\cdot\mathbf{x}}$ or its continuous analogue $\mathbf{E}_0(\mathbf{x}) = \int \mathbf{a}(\psi) e^{i \mathbf{R}_\psi \mathbf{k} \cdot \mathbf{x}} \mathrm{d}\psi$, as seen in the twisted X-rays, satisfies the first equation in \Cref{eq:DesignEquation2}. This idea is generalized in the subsequent construction. The second equation in \Cref{eq:DesignEquation2} imposes the divergence-free condition, which requires that $\mathbf{R}_\psi \mathbf{k} \cdot \mathbf{a}(\psi) = 0$. This condition suggests that the polarization or amplitude vector $\mathbf{a}(\psi)$ must be orthogonal to the corresponding wave vector $\mathbf{R}_\psi \mathbf{k}$. In twisted X-rays, this holds automatically because $\mathbf{a}(\psi) = \mathbf{R}_\psi \mathbf{n}$ and $\mathbf{R}_\psi \mathbf{k} \cdot \mathbf{R}_\psi \mathbf{n} = \mathbf{k} \cdot \mathbf{n} = 0$. We retain this orthogonality requirement in the general case.

Next, we consider a superposition of time-harmonic plane waves propagating in all directions, allowing full freedom in specifying their amplitudes and phases while ensuring that the polarization vectors remain orthogonal to the corresponding wave vectors. Each constituent plane wave satisfies Maxwell's equations and shares a common time-harmonic dependence of the form: $$\mathbf{E}(\mathbf{x},t) = \mathbf{E}_0 (\mathbf{x})e^{-i \omega t}, \quad \mathbf{B}(\mathbf{x},t) = \mathbf{B}_0 (\mathbf{x})e^{-i \omega t}, \quad \omega>0.$$ By linearity, the superposition of such plane waves also retains the time-harmonic form, allowing us to focus exclusively on the spatial components. As shown in \Cref{eq:MagneticField}, the magnetic field is determined by the electric field via: $$\mathbf{B}_0(\mathbf{x}) = -\frac{i}{\omega} \nabla \times \mathbf{E}_0(\mathbf{x}).$$ Thus, it suffices to construct $\mathbf{E}_0 (\mathbf{x})$ alone by solving \Cref{eq:ElectricField}.

To construct a superposition of plane waves, consider a standard plane wave $\mathbf{n}_0 e^{i \mathbf{k}_0\cdot \mathbf{x}}$, where $\mathbf{k}_0 \cdot \mathbf{k}_0 = \omega^2/c^2$ with polarization $\mathbf{n}_0 \cdot \mathbf{k}_0=0$. Rotating the polarization about $\mathbf{k}_0$ generates a one-parameter family $\mathbf{R}_1 (\theta_1) \mathbf{n}_0 e^{i \mathbf{k}_0 \cdot \mathbf{x}}$, where $\mathbf{R}_1 (\theta_1)$ is a rotation about wave vector $\mathbf{k_0}$ and $\theta_1$ is a free parameter. Assuming $\mathbf{k}_0 \in \mathbb{R}^3$, it has two degrees of directional freedom and we introduce rotations $\mathbf{R}_2 (\theta_2)$ and $\mathbf{R}_3 (\theta_3)$ about two mutually orthogonal axes perpendicular to $\mathbf{k}_0$. Thus, a general three-parameter family of plane waves is: $$ \mathbf{R}_3(\theta_3) \mathbf{R}_2(\theta_2) \mathbf{R}_1(\theta_1) \mathbf{n}_0 e^{i \mathbf{R}_3(\theta_3) \mathbf{R}_2(\theta_2) \mathbf{k}_0 \cdot \mathbf{x}}, $$ where $\theta_2$,$\theta_3$ are free parameters. A more general case with $\mathbf{k}_0 \in \mathbb{C}^3$, relevant for modeling electromagnetic radiation in a dissipative medium or evanescent waves, is discussed in \Cref{sec:ComplexWaveVector}. A general superposition with a complex amplitude $\hat{A}(\theta_1,\theta_2,\theta_3)$ that encodes both the magnitude and relative phase of the components, can be written as:

\begin{align*}
    \mathbf{E}_0(\mathbf{x}) &= \int_{-\pi}^\pi \int_{-\pi/2}^{\pi/2} \int_{-\pi}^\pi \hat{A}(\theta_1,\theta_2,\theta_3) \mathbf{R}_3 (\theta_3) \mathbf{R}_2 (\theta_2) \mathbf{R}_1 (\theta_1) \mathbf{n}_0 e^{i (\mathbf{R}_3 (\theta_3) \mathbf{R}_2 (\theta_2) \mathbf{k}_0 \cdot \mathbf{x})} \mathrm{d}\theta_1 \mathrm{d}\theta_2 \mathrm{d}\theta_3.
\end{align*}

When the propagation direction is fixed, superposing over $\theta_1$ changes the polarization but not the propagation direction. Therefore, we can absorb the $\theta_1$-dependence into the amplitude, in which case the field becomes:

\begin{align*}
    \mathbf{E}_0(\mathbf{x}) &= \int_{-\pi}^\pi \int_{-\pi/2}^{\pi/2} A(\theta_2,\theta_3) \mathbf{R}_3 (\theta_3) \mathbf{R}_2 (\theta_2) \mathbf{R}_1\!\left(\theta_1(\theta_2,\theta_3)\right) \mathbf{n}_0 e^{i (\mathbf{R}_3 (\theta_3) \mathbf{R}_2 (\theta_2) \mathbf{k}_0 \cdot \mathbf{x})} \mathrm{d}\theta_2 \mathrm{d}\theta_3.
\end{align*}
In this form, the function $\theta_1 ( \theta_2, \theta_3 )$ represents the dependence of polarization vector on the propagation direction. The resulting time-harmonic electric field $\mathbf{E}(\mathbf{x},t) = \mathbf{E}_0 (\mathbf{x}) e^{-i \omega t}$ is then given by:

\begin{align} \label{eq:FinalSolution}
    \mathbf{E}(\mathbf{x},t) &= e^{-i \omega t} \int_{-\pi}^\pi \int_{-\pi/2}^{\pi/2} A(\theta_2,\theta_3) \mathbf{R}_3 (\theta_3) \mathbf{R}_2 (\theta_2) \mathbf{R}_1\! \left(\theta_1(\theta_2,\theta_3)\right) \mathbf{n}_0 e^{i (\mathbf{R}_3 (\theta_3) \mathbf{R}_2 (\theta_2) \mathbf{k}_0 \cdot \mathbf{x})} \mathrm{d}\theta_2 \mathrm{d}\theta_3
\end{align}

This representation admits the following physical interpretation: a time-harmonic electromagnetic field in free space is represented as a continuous superposition of plane waves, each with a specific propagation direction and polarization. The complex amplitude function $A(\theta_2,\theta_3)$ encodes the contribution of each wave component, while the rotation matrices and the function $\theta_1 (\theta_2, \theta_3)$ define the orientations of the wavevector and polarization, respectively. This formulation extends the twisted X-ray construction by enabling arbitrary control over the angular distribution of energy and polarization. As a result, it is suitable for broader applications.

We claim that this representation spans all bounded time-harmonic solutions of Maxwell's equations in free space \eqref{eq:MaxwellEquationInFreeSpace}, provided that the amplitude function $A(\theta_2,\theta_3)$, supported on $\mathbb{S}^2$, is allowed to be a distribution. Evanescent waves, as well as waves in dissipative media that require complex wave vectors, are excluded from this representation and are discussed separately in \Cref{sec:ComplexWaveVector}.

To support this claim, we begin by defining the function spaces used for the electric field. Specifically, we consider the space of locally integrable functions of polynomial growth (also known as slowly increasing or tempered functions). Let
\begin{itemize}
    \item $\mathcal{A}(\mathbb{R}^3)$ be the space of functions $f:\mathbb{R}^3 \to \mathbb{C}$ such that $f \in L^1_{loc} (\mathbb{R}^3) $ and there exists an integer $k\geq0$ satisfying $\int_{\mathbb{R}^3} \frac{|f(\mathbf{x})|}{(1+|x|^2)^k} \mathrm{d}\mathbf{x} < \infty$.
    \item $\mathcal{A}^3$ be the space of vector-valued function $\mathbf{F}= (f_1,f_2,f_3)$, with each component $f_i \in \mathcal{A}$.
    \item $\mathcal{S}(\mathbb{R}^3)$ denote the Schwartz space of rapidly decreasing smooth functions.
    \item $\mathcal{S}'(\mathbb{R}^3)$ denotes the space of tempered distributions, (continuous dual space of $\mathcal{S}$).
    \item $\mathcal{E}(\mathbb{R}^3)$ denotes the space $C^\infty(\mathbb{R}^3)$ of smooth functions,
    \item $\mathcal{E}'(\mathbb{R}^3)$ denotes the space of compactly supported distributions (continuous dual space of $\mathcal{E}$).
\end{itemize}

Any function in $\mathcal{A}$ induces a tempered distribution, $\mathcal{A}^3 \subset \mathcal{S}'^3$ (see, e.g., Example 1.11.2 in \cite{kesavan_topics_1989}). Because $\mathcal{A}^3 \subset \mathcal{S}'^3$, operators such as $\nabla\cdot\mathbf{E}_0$ (divergence) and $\nabla\times \mathbf{E}_0$ (curl) are well-defined in the distributional sense for every $\mathbf{E}_0\in \mathcal{A}^3$.

We briefly recall the basic notions from distribution theory that will be used throughout this section. 

\begin{definition} \label{def:ActionOfDistribution}
    The action of a distribution $\mathcal{T} \in \mathcal{E}'$ on a test function $f\in\mathcal{E}$ is a scalar, denoted by $\mathcal{T}(f)$ or, more commonly, by the dual pairing bracket $\left< \mathcal{T}, f\right>$. 
\end{definition}

\begin{remark}\label{rem:Distributions}
    If a distribution $\mathcal{T} \in \mathcal{E}'(\mathbb{R}^3)$ is non-singular, it can be identified with a smooth function $T$, and a well-defined integral gives its action:
    $\left< \mathcal{T}, f \right> = \int_{\mathbb{R}^3} T(\mathbf{x}) f(\mathbf{x}) \mathrm{d}\mathbf{x}.$
    For singular distributions (e.g., the Dirac delta), such a function $T$ does not exist. Nevertheless, we use integral notation symbolically to represent distributional action $\left< \mathcal{T}, f\right>$. Note that, non-singular distributions are dense in all distributions (see, e.g., Theorem 4.1.5 and 7.1.8 in \cite{hormander_analysis_1990}).
\end{remark} 

We now recall a standard result relating the support of a distribution to multiplication by smooth functions.

\begin{lemma} \label{lemma:DistributionWithCompactSupport}
    If the product of a polynomial $p$ and a distribution $\mathcal{T} \in \mathcal{S}'$ is the zero distribution, then the support of $\mathcal{T}$ is contained in the zero set of $p$.
\end{lemma}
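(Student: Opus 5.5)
We show that the complement of the zero set $Z(p)=\{\mathbf{x}\in\mathbb{R}^3: p(\mathbf{x})=0\}$ is contained in the complement of $\operatorname{supp}\mathcal{T}$. Equivalently, every point $\mathbf{x}_0$ with $p(\mathbf{x}_0)\neq 0$ has an open neighbourhood $U$ on which $\mathcal{T}$ vanishes, i.e.\ $\langle \mathcal{T},\varphi\rangle=0$ for all $\varphi\in C_c^\infty(U)$. Since the support of a distribution is by definition the complement of the largest open set on which it vanishes, this gives $\operatorname{supp}\mathcal{T}\subset Z(p)$.

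Fix $\mathbf{x}_0$ with $p(\mathbf{x}_0)\neq0$. Because $p$ is continuous and $Z(p)$ is closed, there is a ball $U=B(\mathbf{x}_0,r)$ with $\overline{U}\cap Z(p)=\emptyset$. Choose a cutoff $\chi\in C_c^\infty(\mathbb{R}^3)$ that equals $1$ on $\overline{U}$ and is supported in a slightly larger ball on which $p$ still does not vanish. Then $\chi/p$ is smooth and compactly supported. For any test function $\varphi\in C_c^\infty(U)$, set $\psi=\varphi/p$. This is well defined because $\varphi=\chi\varphi$, so $\psi=(\chi/p)\varphi\in C_c^\infty(U)\subset\mathcal{S}$.

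The key step is the identity
\begin{equation*}
\langle \mathcal{T},\varphi\rangle=\langle \mathcal{T},p\psi\rangle=\langle p\mathcal{T},\psi\rangle=0 .
\end{equation*}
The middle equality is the definition of multiplication of a tempered distribution by a polynomial: $\langle p\mathcal{T},\psi\rangle:=\langle\mathcal{T},p\psi\rangle$. This is legitimate because multiplication by a polynomial maps $\mathcal{S}$ continuously into itself. The last equality is the hypothesis $p\mathcal{T}=0$. Hence $\mathcal{T}$ vanishes on $U$, and therefore $\mathbf{x}_0\notin\operatorname{supp}\mathcal{T}$.

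The main point needing care is the division step: we must check that $\varphi/p$ is again an admissible test function. This follows from the strict separation of $\operatorname{supp}\varphi$ from $Z(p)$ together with the cutoff. Since everything is local, with compactly supported test functions, no growth issues in $\mathcal{S}$ arise.

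Nothing beyond continuity and smoothness of $p$ is used, so the same argument works verbatim with $p$ replaced by any smooth function of polynomial growth in all derivatives, i.e.\ any multiplier of $\mathcal{S}$. In later use, for instance with $p(\boldsymbol{\xi})=|\boldsymbol{\xi}|^2-\omega^2/c^2$ applied to the Fourier transform of $\mathbf{E}_0$, this lemma places the support on the sphere $\mathbb{S}^2$ of radius $\omega/c$.
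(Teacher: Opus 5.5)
Your proof is correct and is essentially the paper's argument: you factor a test function supported off $Z(p)$ as $p\psi$ with $\psi$ again a test function, and use $\langle\mathcal{T},p\psi\rangle=\langle p\mathcal{T},\psi\rangle=0$. Your localization to $C_c^\infty$ functions on a ball whose closure misses $Z(p)$ is actually the cleaner version, because it makes the division step trivial, whereas the paper asserts without justification that every Schwartz function supported off $Z(p)$ equals $ph$ with $h\in\mathcal{S}$, which is true but, for supports approaching $Z(p)$ at infinity, rests on a \L{}ojasiewicz-type lower bound for $|p|$.
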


\begin{proof}
    Since $p$ is a polynomial and hence smooth, the zero set of $p$ is closed. Let $f\in\mathcal{S}$ be any test function. Then $p\mathcal{T}(f) = \mathcal{T} (p f ) =0$ from the definition of multiplication of a function by a distribution. Now, any test function $g\in \mathcal{S}$ supported outside the zero set of $p$, can be written as $g= p h$ with $h\in \mathcal{S}$ and the support of $g$ is contained in the open set complementary to the zero set of $p$. Thus, $\mathcal{T}(g) = \mathcal{T} (p h) =0$. 
\end{proof}

\begin{remark}\label{rem:CompactDistributions}    
    Any distribution with compact support necessarily has finite order (see corollary of Theorem 3.1.2 in \cite{Friedlander1998}). The order of a distribution $\mathcal{T} \in \mathcal{E}$ is the smallest integer $N$ such that, for every compact set $K\subset \mathbb{R}^3$, there exists a constant $C$ such that, $\left< \mathcal{T}, f\right> \leq C \sup_{\left| \alpha \right| \leq N}{\left| \partial^\alpha f \right|}$, for all $f\in\mathcal{E}(\mathbb{R}^3)$ supported in $K$.

    In particular, distributions $\mathcal{T}$ supported on $\mathbb{S}^2$ and of order zero (regular distributions) admit the representation $$\left< \mathcal{T}, f \right> = \int_{\mathbb{S}^2} T(\mathbf{x}) f(\mathbf{x}) \mathrm{d}\mathbf{x},$$ for some integrable function $T$ defined on the sphere and for all $f\in\mathcal{E}(\mathbb{R}^3)$. Here, $\mathrm{d}\mathbf{x}$ denotes the induced (Lebesgue) surface measure on the $\mathbb{S}^2$.
\end{remark}

\begin{theorem}
    Let $\mathbf{E}_0 (\mathbf{x})\in\mathcal{A}^3(\mathbb{R}^3)$ be the spatial part of a time-harmonic solution of Maxwell's equations in free space \eqref{eq:MaxwellEquationInFreeSpace}. Then $\mathbf{E}(\mathbf{x},t) = \mathbf{E}_0(\mathbf{x}) e^{-i\omega t} $ is smooth and can be represented in the form \eqref{eq:FinalSolution} with appropriate distributions $A(\theta_1, \theta_2)$ and $\theta_1(\theta_2,\theta_3)$.
\end{theorem}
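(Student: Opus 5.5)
The plan is to pass to the Fourier side and use \Cref{lemma:DistributionWithCompactSupport} to locate the support of $\widehat{\mathbf{E}}_0$. Since $\mathbf{E}_0\in\mathcal{A}^3\subset\mathcal{S}'^3$, each component has a Fourier transform $\widehat{E}_j\in\mathcal{S}'$. Applying the Fourier transform to \Cref{eq:ElectricField} gives
\begin{align*}
\left(|\boldsymbol{\xi}|^2-\tfrac{\omega^2}{c^2}\right)\widehat{E}_j = 0, \qquad \sum_{j}\xi_j \widehat{E}_j = 0 .
\end{align*}
By \Cref{lemma:DistributionWithCompactSupport} with $p(\boldsymbol{\xi})=|\boldsymbol{\xi}|^2-\omega^2/c^2$, each $\widehat{E}_j$ is supported on the sphere of radius $k=\omega/c$. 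After rescaling, this is $\mathbb{S}^2$, and it is compact. Hence $\widehat{E}_j\in\mathcal{E}'$ has finite order by \Cref{rem:CompactDistributions}.

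\textbf{Smoothness.} The inverse Fourier transform of a compactly supported distribution is given by the pairing
\begin{align*}
E_j(\mathbf{x}) = (2\pi)^{-3}\left\langle \widehat{E}_j, e^{i\boldsymbol{\xi}\cdot\mathbf{x}}\right\rangle
\end{align*}
(Paley--Wiener--Schwartz). This pairing is smooth, and in fact entire, in $\mathbf{x}$. Multiplying by $e^{-i\omega t}$ then gives smoothness of $\mathbf{E}(\mathbf{x},t)$ jointly in $(\mathbf{x},t)$. The same formula is already a superposition of plane waves $e^{i\boldsymbol{\xi}\cdot\mathbf{x}}$ with $|\boldsymbol{\xi}|=k$.

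\textbf{Parametrization.} Next I would parametrize the sphere by $\boldsymbol{\xi}=\mathbf{R}_3(\theta_3)\mathbf{R}_2(\theta_2)\mathbf{k}_0$ with $(\theta_2,\theta_3)\in[-\pi/2,\pi/2]\times[-\pi,\pi]$, which is essentially latitude/longitude coordinates. I would pull back the vector distribution $\widehat{\mathbf{E}}_0$ to a distribution on this parameter rectangle, absorbing the Jacobian $\cos\theta_2$. The divergence condition $\boldsymbol{\xi}\cdot\widehat{\mathbf{E}}_0=0$ says that the vector-valued distribution takes values in the tangent plane $\boldsymbol{\xi}^\perp$. That plane is spanned by $\mathbf{R}_3\mathbf{R}_2\mathbf{n}_0$ and $\mathbf{R}_3\mathbf{R}_2\mathbf{R}_1(\pi/2)\mathbf{n}_0$, which are smooth in $(\theta_2,\theta_3)$ away from the poles. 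I would therefore write
\begin{align*}
\widehat{\mathbf{E}}_0 = a\,\mathbf{R}_3\mathbf{R}_2\mathbf{n}_0 + b\,\mathbf{R}_3\mathbf{R}_2\mathbf{R}_1(\tfrac{\pi}{2})\mathbf{n}_0
\end{align*}
with scalar distributions $a,b$, obtained by pairing against the smooth dual frame. The combination $a\mathbf{R}_1(0)\mathbf{n}_0+b\mathbf{R}_1(\pi/2)\mathbf{n}_0$ equals $A\,\mathbf{R}_1(\theta_1)\mathbf{n}_0$ for appropriate complex $A$ and angle $\theta_1$, since real polarization angles alone cannot encode relative phase. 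To accommodate this, I would allow $\theta_1$ complex, or more simply keep the two-term sum, which is the same form with $\theta_1\in\{0,\pi/2\}$ superposed. In the singular case one interprets $A$ and $\theta_1(\theta_2,\theta_3)$ as distributions in the symbolic sense of \Cref{rem:Distributions}. Substituting into the inverse Fourier pairing yields \eqref{eq:FinalSolution}.

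\textbf{Main obstacle.} The hardest step is making the decomposition into $A$ and $\theta_1$ rigorous for singular distributions. Products of distributions with the nonlinear function $\mathbf{R}_1(\theta_1)$ are not generally defined. The coordinate frame also degenerates at the poles $\theta_2=\pm\pi/2$, where a distribution might concentrate (for example, a delta at the pole). I would handle this in two ways. First, use two overlapping coordinate charts via a smooth partition of unity on $\mathbb{S}^2$, with a rotated $\mathbf{k}_0$ for the polar cap, and note that a pole-supported piece can be re-expressed by rotating coordinates. Second, state the polarization decomposition linearly, through the two scalar distributions $a,b$. The form with $A$ and $\theta_1$ is then a formal rewriting, valid pointwise wherever $a$ and $b$ are functions and justified in general by the density of regular distributions noted in \Cref{rem:Distributions}.
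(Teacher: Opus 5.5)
You follow the paper's proof up to the last step. You take the Fourier transform of $\mathbf{E}_0\in\mathcal{A}^3$ and use \Cref{lemma:DistributionWithCompactSupport} to place $\widehat{E}_j$ on the sphere $|\boldsymbol{\xi}|=\omega/c$. Compact support and Paley--Wiener--Schwartz then give smoothness and the plane-wave pairing. Finally you parametrize by $\mathbf{R}_3(\theta_3)\mathbf{R}_2(\theta_2)\mathbf{k}_0$ and use the divergence condition to confine the amplitude to the tangent plane.

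You then depart from the paper, and you are right to. The paper asserts that $\mathbf{R}_1(\theta_1)\mathbf{n}_0$ ``spans the entire plane orthogonal to $\hat{\mathbf{k}}_0$'' and sets $\mathbf{N}=A\,\mathbf{R}_3\mathbf{R}_2\mathbf{R}_1(\theta_1)\mathbf{n}_0$. That is true only over $\mathbb{R}$. With complex $A$ and real $\theta_1$ it produces only complex multiples of real tangent vectors, i.e.\ linear polarization. Since the Fourier transform is unique, the circularly polarized wave $(\mathbf{e}_1+i\mathbf{e}_2)e^{ikx_3}$ has no representation of the literal form \eqref{eq:FinalSolution} whenever $\mathbf{e}_3\neq\pm\mathbf{n}_0$. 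Its transform is a point mass with amplitude $\mathbf{e}_1+i\mathbf{e}_2$.

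Your linear decomposition $a\,\mathbf{t}_1+b\,\mathbf{t}_2$ is the right repair: it is a sum of two copies of \eqref{eq:FinalSolution}, one with $\theta_1\equiv0$ and one with $\theta_1\equiv\pi/2$. The paper tacitly does the same in its later point-matching theorem, where $\alpha_j,\beta_j$ are treated as independent unknowns. Two further choices of yours also improve on the paper:
\begin{itemize}
\item You handle the poles with two charts, which the paper ignores.
\item You do not replace $\widehat{E}_j$ by integrable densities $N_i$, as the paper does. That replacement is false in general: $x_1e^{ikx_3}\mathbf{e}_2\in\mathcal{A}^3$ is a divergence-free Helmholtz solution whose transform is a tangential derivative of a point mass.
\end{itemize}

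Two points still need fixing.

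\textbf{The complex-$\theta_1$ option.} Drop it. The coefficient pair $(u,v)=A(\cos\theta_1,\sin\theta_1)$ satisfies $u^2+v^2=A^2$. So even for $\theta_1\in\mathbb{C}$ it never equals a nonzero null pair such as $(1,\pm i)$, which is exactly circular polarization. Density of regular distributions cannot rescue a single pair $(A,\theta_1)$ either. The obstruction is algebraic and is already present for smooth $a,b$ with $a/b\notin\mathbb{R}$.

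\textbf{The pull-back to the parameter rectangle.} To pull $\widehat{\mathbf{E}}_0$ back to the $(\theta_2,\theta_3)$ rectangle, it must act only through restrictions $f|_{S}$. The support statement alone does not give this: $\partial_r\delta_{S}$ is supported on $S$ but is not of that type. What excludes such terms is the equation $p\,\widehat{E}_j=0$ together with $\nabla p\neq0$ on $S$.
\begin{itemize}
\item In local coordinates with $s=p$, finite order gives $\widehat{E}_j=\sum_{l\le m}u_l\otimes\delta^{(l)}(s)$.
\item Since $s\,\delta^{(l)}(s)=-l\,\delta^{(l-1)}(s)$, the equation forces $u_l=0$ for $l\ge1$.
\item Hence $\widehat{E}_j=u_0\otimes\delta(s)$, with $u_0$ a distribution on $S$.
\end{itemize}
The paper skips this step as well. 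With it inserted, your argument is a complete proof of the two-polarization version of the theorem.
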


\begin{proof}
     We consider solutions $\mathbf{E}_0 \in \mathcal{A}^3$, i.e., $\mathbf{E}_0=(E_1,E_2,E_3)$ with each $E_i\in\mathcal{A}\subset\mathcal{S}'$. The Fourier transform of the $E_i$ exists and lies in $\mathcal{S}'$. Note that, applying the Fourier transform twice yields $\mathcal{F} \left( \mathcal{F} \left( E_i \right) \right) (\mathbf{x}) = E_i (-\mathbf{x})$. Since $E_i$ satisfies the Helmholtz equation, we obtain:
     \begin{align*}
         \left( \mathbf{k}\cdot\mathbf{k} - \frac{\omega^2}{c^2}\right) \hat{E_i}(\mathbf{k})=\mathbf{0}.
     \end{align*}
     By \Cref{lemma:DistributionWithCompactSupport}, each $\hat{E_i}$ is supported on the sphere $\mathbf{k}\cdot\mathbf{k} = \omega^2/c^2$ and hence it can be viewed as $\mathcal{E}' \left( \left( \omega/c \right)\mathbb{S}^2\right)$, continuous dual space of $C^\infty\left( \left( \omega/c \right)\mathbb{S}^2\right)$. Thus, $\hat{E_i} \in\mathcal{E}' \subset \mathcal{S}'$ and hence $\hat{E_i}$ is a distribution of finite order (see \Cref{rem:CompactDistributions}). For $f\in\mathcal{E}$,
     \begin{align*}
         \hat{E_i}(f) = \frac{1}{(2\pi)^3}\int_{\mathbf{k}\cdot\mathbf{k} = \omega^2/c^2} N_i(\mathbf{k}) f({\mathbf{k}}) \mathrm{d}\mathbf{k},
     \end{align*}
     for some Lebesgue integrable functions on the sphere $N_i$ (see \Cref{rem:Distributions}). By the Paley-Wiener-Schwartz theorem, since $\hat{E_i}$ is a distribution with compact support, its Fourier transform is a smooth function with tempered growth, which can be expressed by the action of $\hat{E_i}$ on the exponential basis \cite{hormander_analysis_1990}:
     \begin{align*}
         \mathcal{F} (\hat{E_i}) (\mathbf{x}) = \hat{E}_{i} (e^{-i\mathbf{k}\cdot\mathbf{x}}).
     \end{align*}
     Note that $\hat{E}_{i}$ acts on $e^{-i\mathbf{k}\cdot\mathbf{x}}$ as a function of $\mathbf{k}$. We can recover the $E_i$ as follows:

     \begin{align*}
         E_i(\mathbf{x}) &=  (2\pi)^3 \mathcal{F} (\hat{E_i}) (-\mathbf{x})\\
         &= (2\pi)^3 \hat{E}_{i} (e^{i\mathbf{k}\cdot\mathbf{x}})\\
         &= \int_{\mathbf{k}\cdot\mathbf{k} = \omega^2/c^2} N_i(\mathbf{k}) e^{i\mathbf{k}\cdot\mathbf{x}} \mathrm{d}\mathbf{k}.
     \end{align*}
     The complete spatial part of the electric field can be written as 
     \begin{align*}
         \mathbf{E}_0(\mathbf{x}) &= \int_{\mathbf{k}\cdot\mathbf{k} = \omega^2/c^2} \mathbf{N}(\mathbf{k}) e^{i\mathbf{k}\cdot\mathbf{x}} \mathrm{d}\mathbf{k},
     \end{align*}
     for some Lebesgue integrable vector functions on the sphere  $\mathbf{N}(\mathbf{k})$. We redefine the variable $\mathbf{k} = (\omega/c) \hat{\mathbf{k}}$, where $\hat{\mathbf{k}}\in \mathbb{S}^2$. We can parameterize $\mathbb{S}^2$ by two parameters. Let $\hat{\mathbf{k}}_0$ and $\mathbf{n}_0$ be any two orthogonal vectors such that $|\hat{\mathbf{k}}_0|=|\mathbf{n}_0|=1$ and $\{ \hat{\mathbf{k}}_0,\mathbf{n}_0 \times \hat{\mathbf{k}}_0,\mathbf{n}_0 \} $ forms an orthonormal basis in $\mathbb{R}^3$. Define SO(3)-valued functions $\mathbf{R}_2(\theta_2)$ and $\mathbf{R}_3(\theta_3)$ such that $\mathbf{R}_2 (\theta_2) (\mathbf{n}_0 \times \hat{\mathbf{k}}_0) = (\mathbf{n}_0 \times \hat{\mathbf{k}}_0)$ for any $\theta_2 \in [-\pi/2,\pi/2]$ and $\mathbf{R}_3 (\theta_3) \mathbf{n}_0 = \mathbf{n}_0$ for any $\theta_3 \in [-\pi,\pi)$. We can see that $\mathbf{R}_3(\theta_3) \mathbf{R}_2 (\theta_2)\hat{\mathbf{k}}_0$, such that $\theta_2 \in [-\pi/2,\pi/2]$ and $\theta_3 \in [-\pi,\pi)$, spans  $\mathbb{S}^2$, from the geometrical interpretation of $\theta_2, \theta_3$ as being polar and azimuthal angles in spherical coordinates. The solution $\mathbf{E}_0$ takes the form:

    \begin{align*}
        \mathbf{E}_0(\mathbf{x}) &= \int_{\mathbb{S}^2} \mathbf{N}(\hat{\mathbf{k}}) e^{i (\omega/c)\hat{\mathbf{k}}\cdot\mathbf{x}} \mathrm{d}\hat{\mathbf{k}}\\
        &=\int_{\theta_3=-\pi}^{\pi} \int_{\theta_2=-\pi/2}^{\pi/2} \mathbf{N}(\theta_3,\theta_2) e^{i\mathbf{R}_3(\theta_3) \mathbf{R}_2(\theta_2)(\omega/c)\hat{\mathbf{k}}_0\cdot\mathbf{x}} \mathrm{d}\theta_2 \mathrm{d}\theta_3
    \end{align*}
    The representation is constructed using Helmholtz equations, and thus we can verify that it satisfies  \Cref{eq:HelmholtzEquation}. However, the divergence-free condition imposes additional restrictions on the orthogonality of $\mathbf{N}(\theta_2 ,\theta_3)$ and $\mathbf{R}_3(\theta_3) \mathbf{R}_2(\theta_2)(\omega/c)\hat{\mathbf{k}}_0$.

    \begin{align*}
        \nabla\cdot\mathbf{E}_0 (\mathbf{x}) &=\int_{\theta_3=-\pi}^{\pi} \int_{\theta_2=-\pi/2}^{\pi/2} \mathbf{R}_3(\theta_3) \mathbf{R}_2(\theta_2)(\omega/c)\hat{\mathbf{k}}_0\cdot\mathbf{N}(\theta_3,\theta_2) e^{i\mathbf{R}_3(\theta_3) \mathbf{R}_2(\theta_2)(\omega/c)\hat{\mathbf{k}}_0\cdot\mathbf{x}} \mathrm{d}\theta_2 \mathrm{d}\theta_3.
    \end{align*}
    Thus, we seek most general $\mathbf{N}(\theta_2 ,\theta_3)$ such that, $\mathbf{N}(\theta_2 ,\theta_3) \cdot \mathbf{R}_3(\theta_3) \mathbf{R}_2 (\theta_2) (\omega / c) \hat{\mathbf{k}}_0 = 0$ for all $\theta_2$ and $\theta_3$. This condition is satisfied if $\mathbf{R}_2(-\theta_2) \mathbf{R}_3(-\theta_3) \mathbf{N}(\theta_2 ,\theta_3)$ lies in the plane perpendicular to $\hat{\mathbf{k}}_0$, which is the plane spanned by $(\mathbf{n}_0 \times \hat{\mathbf{k}}_0)$ and $\mathbf{n}_0$. Define  $\mathbf{R}_1(\theta_1)$ such that $\mathbf{R}_1 (\theta_1) \hat{\mathbf{k}}_0 = \hat{\mathbf{k}}_0$ for any $\theta_1 \in [0,2\pi)$. Then, $\mathbf{R}_1 (\theta_1) \mathbf{n}_0$ spans the entire plane orthogonal to $\hat{\mathbf{k}}_0$. Set $ \mathbf{N}(\theta_2 ,\theta_3) = A(\theta_2, \theta_3) \mathbf{R}_3(\theta_3) \mathbf{R}_2(\theta_2) \mathbf{R}_1 \left(\theta_1 (\theta_2,\theta_3) \right) \mathbf{n}_0$, for any $\theta_1 \left( \theta_2,\theta_3 \right) \in [0,2\pi)$. Defining $\mathbf{k}_0 = (\omega/c) \hat{\mathbf{k}}_0$, we obtain the final form for the smooth time harmonic solution to Maxwell's equations in a free space as:

    \begin{align*}
        \mathbf{E}_0 (\mathbf{x}) &= \int_{\theta_3=-\pi}^{\pi} \int_{\theta_2=-\pi/2}^{\pi/2} A(\theta_2, \theta_3) \mathbf{R}_3(\theta_3) \mathbf{R}_2(\theta_2) \mathbf{R}_1 \left(\theta_1 (\theta_2,\theta_3) \right) \mathbf{n}_0  e^{i \mathbf{R}_3(\theta_3) \mathbf{R}_2 (\theta_2) \mathbf{k}_0 \cdot \mathbf{x}} d\theta_2 d\theta_3
    \end{align*}

    In the case of singular distribution, $A(\theta_2,\theta_3)$ and/or $\theta_1(\theta_2,\theta_3)$ represent distributions supported on sphere $\mathcal{E}' \left( \left( \omega/c \right)\mathbb{S}^2\right)$, and the integration is symbolic.
\end{proof}

Although the above representation is derived for a source-free vacuum medium, it also applies to any source-free non-dissipative homogeneous medium after replacing the propagation velocity $c$ by the appropriate velocity. The field $\mathbf{E}_0:\mathbb{R}^3\to\mathbb{C}^3$ is defined on all of $\mathbb{R}^3$, and the requirement $\mathbf{E}_0\in\mathcal{A}^3$ restricts the solution space to bounded electric fields on all of $\mathbb{R}^3$. In particular, this representation does not include electric fields associated with complex wavevectors. Such fields—corresponding to evanescent waves—are treated separately in \Cref{sec:ComplexWaveVector}.

\begin{remark}
From a physical perspective, the representation \eqref{eq:FinalSolution} with $A,\theta_1\in\mathcal{E}(\mathbb{S}^2)$ (regular distribution) can be interpreted as a superposition of plane waves propagating in all directions $(\theta_2,\theta_3)$, thereby defining a tangential field on $\mathbb{S}^2$. If $A$ and $\theta_1$ are continuous on $\mathbb{S}^2$, then the associated tangential field must vanish at least at one point on the sphere, by the Hairy Ball theorem.
\end{remark}

\subsection{Application: Reflection from a perfect cylinder}
To illustrate how the general representation developed in this section can be applied to physically relevant scattering problems, we consider the reflection of an electromagnetic wave from an infinitely long, perfectly conducting cylinder of radius $\gamma$. This setting naturally aligns with the integral representation in \Cref{eq:FinalSolution} and demonstrates how reflected fields can be constructed explicitly by the appropriate choice of the amplitude distribution.

Let the incoming radiation be the linearly polarized plane wave $$\mathbf{E}_{inc}(\mathbf{x}) = \mathbf{e}_2 e^{i|\mathbf{k}| (\mathbf{e}_1 \cdot \mathbf{x})}, $$ where 
\begin{itemize}
    \item $\mathbf{e}_1$ is the propagation direction,
    \item $\mathbf{e}_2$ is the polarization diretion with $\mathbf{e}_2\cdot\mathbf{e}_1=0$,
    \item $|\mathbf{k}|=\omega/c$,
    \item $\mathbf{e}_3=\mathbf{e}_1 \times \mathbf{e}_2$ is the cylinder axis.
\end{itemize}
An ideal conductor produces a phase shift of $\pi$ upon reflection \cite{Born_Wolf_1999}. The reflected field is obtained by choosing a suitable amplitude function in \Cref{eq:FinalSolution}. To represent the reflected field , we select the amplitude distribution  $$A(\theta_2,\theta_3)= 0.5 \delta(\theta_2) e^{i \left| \mathbf{k} \right| ( r_0 - 2 \gamma \cos{\left(\theta_3/2\right)} ) } \cos{\left(\theta_3/2\right)} U\left(\mathbf{x}\cdot\left( \cos{\left(\theta_3/2 \right)} \mathbf{e}_1 + \sin{\left(\theta_3/2 \right)}\mathbf{e}_2 \right) - \gamma\right)$$ and set $$\theta_1 \left( \theta_2, \theta_3\right) = -\pi/2.$$ Here: 
\begin{itemize}
    \item $\delta(\theta_2)$ selects wavevectors lying in the plane of incidence.
    \item The factor $\cos(\theta_3/2)$ represents the geometric projection of incident flux onto the curved surface.
    \item $U(\cdot)$ is the Heaviside unit step function ensuring the field vanishes inside the conductor.
    \item $r_0$ is an arbitrary reference distance used only to fix the global phase of the reflected field. 
\end{itemize}
The rotations in \Cref{eq:FinalSolution} are chosen to align with the problem geometry:
\begin{align*}
    \mathbf{R}_1(\theta_1) \mathbf{e_1} = \mathbf{e}_1, \quad
    \mathbf{R}_2(\theta_2) \mathbf{e_2} = \mathbf{e}_2, \quad
    \mathbf{R}_3(\theta_3) (\mathbf{e}_3) = \mathbf{e}_3.
\end{align*}
Let $\theta_3/2 =\theta$. Substituting this into \Cref{eq:FinalSolution} yeilds the reflected field:

\begin{align}\label{eq:ReflectionFromCylinder}
    \mathbf{E} (\mathbf{x}) = \int_{-\pi/2}^{\pi/2} e^{i \left|\mathbf{k}\right| \left(r_0 - 2\gamma \cos{\theta} \right)} \cos{\theta} U\left( \mathbf{x}\cdot\left( \cos{\theta} \mathbf{e}_1 + \sin{\theta}\mathbf{e}_2 \right) - \gamma \right) \mathbf{R}_3(2\theta) \mathbf{e}_2 e^{i \mathbf{R}_3(2\theta)\mathbf{k}\cdot\mathbf{x}} \mathrm{d}\theta
\end{align}

\begin{figure}[htbp]
    \centering
    \includegraphics[width=0.4\linewidth]{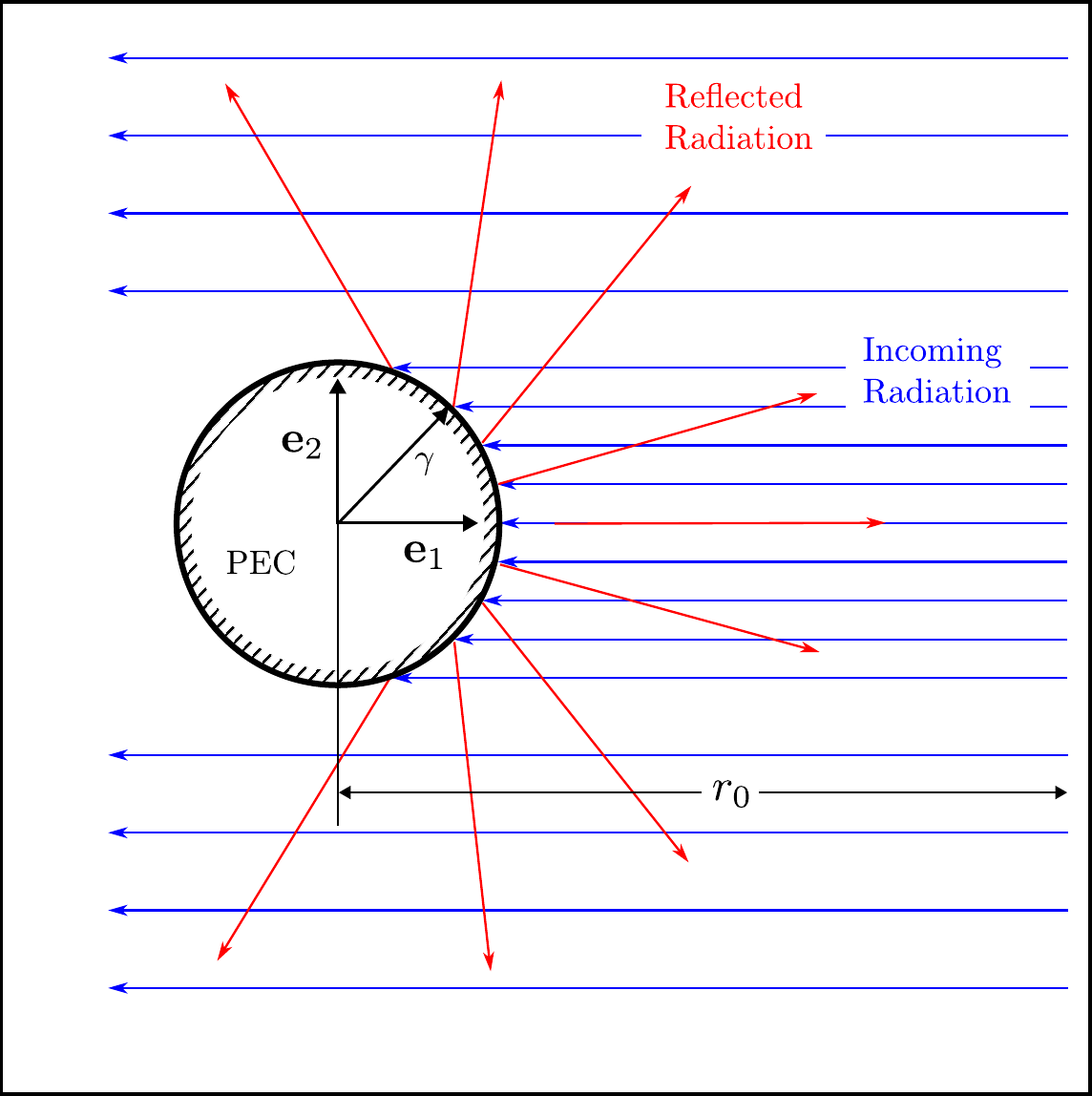}
    \includegraphics[width=0.4\linewidth]{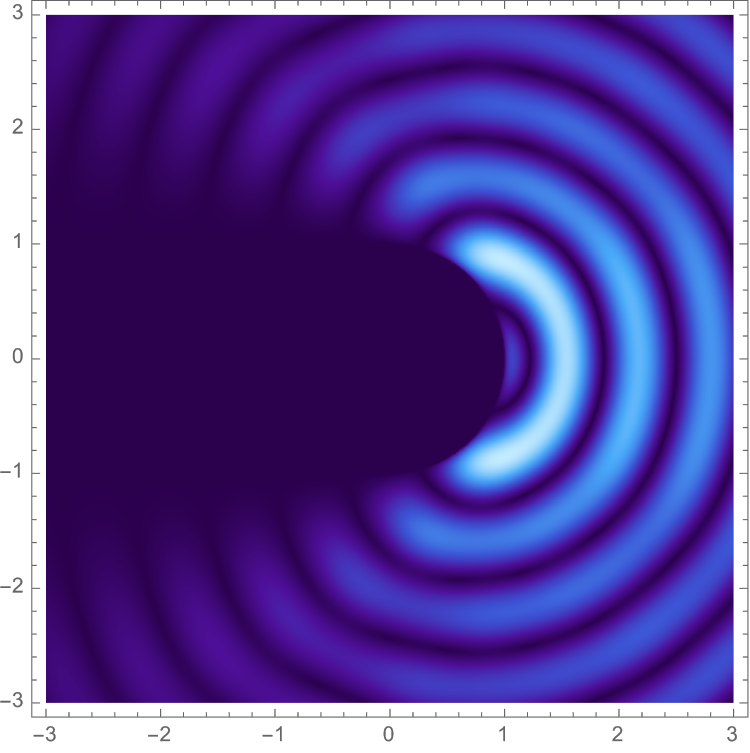}
    \caption{The schematic (left) shows the setup: incoming wave (blue), reflected rays (red), cylinder radius $\gamma$, and reference distance $r_0$ used to fix phase. The intensity plot (right) displays the resulting interference pattern (white: high intensity, dark blue: low intensity), with pronounced constructive and destructive interference when the wavelength is comparable to the cylinder radius.}
    \label{fig:PerfectCylinder}
\end{figure}

Each term in the integrand of \Cref{eq:ReflectionFromCylinder} corresponds to a distinct physical aspect of the reflection from the cylindrical surface:
\begin{enumerate}
    \item \textbf{Phase accumulation:} The factor $e^{i \left|\mathbf{k}\right| \left(r_0 - 2\gamma \cos{\theta} \right)}$ accounts for the extra optical path length between the reference plane and the point of reflection, including the $\pi$-phase jump absorbed into the amplitude.
    \item \textbf{Local incident amplitude:} The factor $\cos(\theta)$ describes how much of the incoming plane wave's flux intersects the surface at angle $\theta$. Points where the surface normal is nearly parallel to $\mathbf{e}_1$ receives the most flux; points near the top and bottom of the cylinder receives the least.
    \item \textbf{Boundary condition enforcement:} The unit step function $U(\mathbf{x} \cdot (\cos(\theta)\mathbf{e}_1 + \sin(\theta)\mathbf{e}_2) - \gamma)$ ensures that the reflected field is zero inside the cylinder, consistent with the physical boundary condition of a perfect conductor.
    \item \textbf{Reflected wavevectors:} The factor $e^{i(\mathbf{R}_3(2\theta)\mathbf{k})\cdot\mathbf{x}}$ represents the outgoing plane wave obtained by reflecting the incident wave across the tangent plane. Because the tangent plane has normal direction $\cos(\theta)\mathbf{e}_1 + \sin(\theta)\mathbf{e}_2$, the angle of incidence equals the angle of reflection.
    \item \textbf{Polarization rotation:} The term $\mathbf{R}_3(2\theta) \mathbf{e}_2$ represents the rotation of the polarization vector under reflection. For a perfectly conducting surface, the tangential component of $\mathbf{E}$ reverses the sign. Although often neglected in geometric optics, the rotation $2\theta$ captures this change in handedness, consistent with \cite{Swindell_71}.
\end{enumerate}

\section{Approximation by trapezoidal formula}\label{sec:ApproximationOfGeneralization}

In practice, evaluating the integral in \eqref{eq:FinalSolution} directly appears to be difficult. However, under mild smoothness and periodicity conditions on the integrand, the trapezoidal rule offers a remarkably efficient and exponentially convergent approximation \cite{trefethen_exponentially_2014}. This enables practical computation and opens the door to realizing such fields physically using a finite number of plane-wave sources.

The integral representation of twisted X-rays exhibit two key properties: integrands are analytic bounded functions, and they are periodic over the domain of integration. These features make the trapezoidal rule particularly well-suited, often outperforming more sophisticated quadrature schemes in terms of simplicity and convergence rate. Moreover, each discrete term in the approximation corresponds to a classical plane-wave, making the method physically interpretable in addition to being computationally attractive.

\begin{figure}[htbp]
  \centering
  \begin{subfigure}[b]{0.7\textwidth}
    \includegraphics[width=\textwidth]{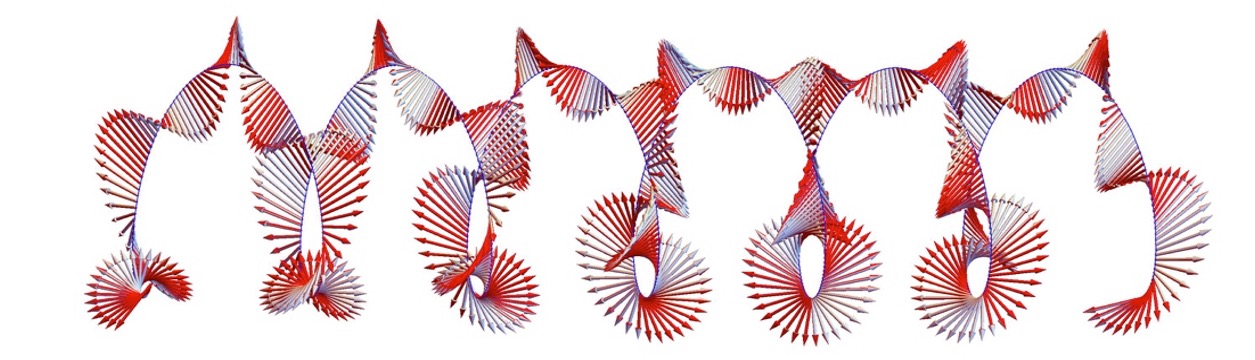}
    \caption{10 terms with maximum normalized error = $0.01$}
    \label{fig:convergence1}
  \end{subfigure}
  \begin{subfigure}[b]{0.7\textwidth}
    \includegraphics[width=\textwidth]{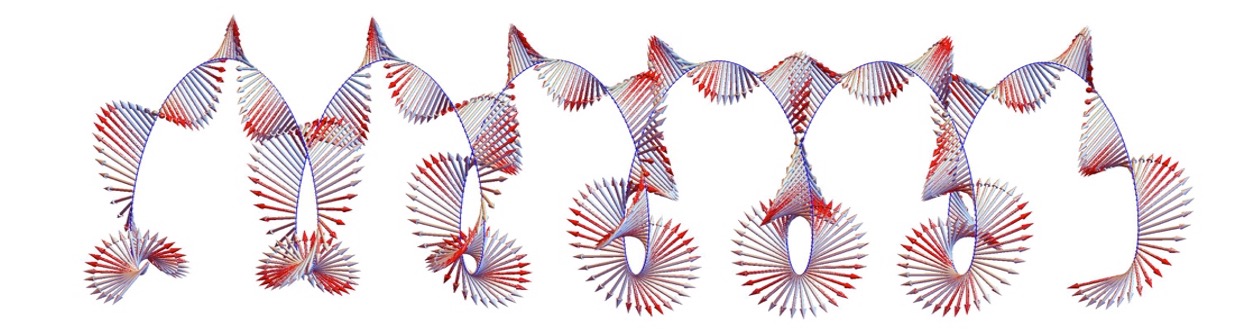}
    \caption{15 terms with maximum normalized error = $1.1 \times 10^{-6}$)}
    \label{fig:convergence2}
  \end{subfigure}
  \begin{subfigure}[b]{0.7\textwidth}
    \includegraphics[width=\textwidth]{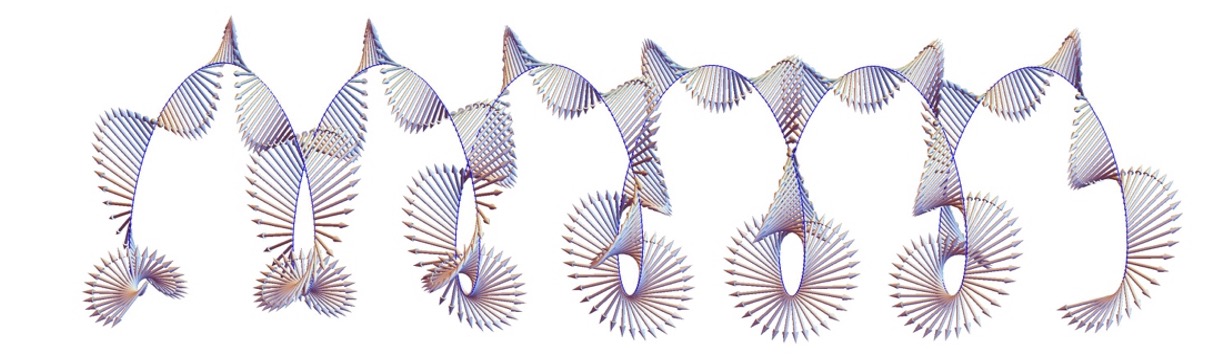}
    \caption{20 terms with maximum normalized error = $1.5 \times 10^{-11}$}
    \label{fig:convergence3}
  \end{subfigure}
  \caption{Each subfigure shows the approximation obtained using $N$ terms in the trapezoidal rule. The red arrows denote the exact electric field computed from the integral representation, and the gray arrows denote the numerical approximation. As $N$ increases from $10$ to $20$, the error decreases from $10^{-2}$ to $10^{-11}$, illustrating the exponential convergence characteristics of the trapezoidal rule for analytic and periodic integrands.}\label{fig:TwistedXRayConvergence}
\end{figure}

The \Cref{fig:TwistedXRayConvergence} demonstrates the rapid convergence of the trapezoidal rule for a representative twisted X-ray electric field along its Poynting vector. In these visualizations, the gray vectors represent the exact solution computed via integral representations, while the red vectors represnet the approximations using a finite number of terms.

The trapezoidal rule exhibits a similar convergence rate for the general representation \Cref{eq:FinalSolution}, provided that the integrand satisfies the necessary conditions for exponential convergence. The necessary conditions are:
\begin{enumerate}
    \item The integrand must be periodic and analytic in the integration variables, necessary to have uniformly and absolutely convergent Fourier series.
    \item The analytic continuation of integrand remain bounded in a strip of complex domain $\Im(z)>-a$ for some $a>0$.
\end{enumerate}

Assume now that the distributions $A(\theta_2,\theta_3)$ and $\theta_1(\theta_2,\theta_3)$ appearing in \Cref{eq:FinalSolution} are \emph{regular}, so that they may be treated as smooth functions on the domain $\theta_3\in [-\pi,\pi), \quad \theta_2\in[-\pi/2,\pi/2].$ Because $\theta_2$ and $\theta_3$ play the roles of polar and azimuthal angles on the sphere $\mathbb{S}^2$, these functions may naturally be viewed as functions defined on $\mathbb{S}^2$. Consequently, they inherit periodicity in $\theta_3$ (the longitudinal angle), but not in $\theta_2$ (the latitudinal angle). In particular, $F\left(\theta_2=-\frac{\pi}{2}, \theta_3 \right) \neq F\left(\theta_2=\frac{\pi}{2}, \theta_3 \right),$ and the same mismatch may appear in derivatives of any order. Thus, the integrand in \Cref{eq:FinalSolution} is periodic in $\theta_3$ with period $2\pi$, but not periodic in $\theta_2$, and therefore the trapezoidal rule in $\theta_2$  cannot exhibit exponential convergence.

A naive remedy would be to extend the domain in $\theta_2$ from $[-\pi/2,\pi/2]$ to $[-\pi,\pi)$ by reflecting the functions $A(\theta_2,\theta_3)$ and $\theta_1(\theta_2,\theta_3)$ across $\theta_2 = \pm\pi/2$. However, this straightforward reflection causes destructive cancellation: the contribution from the reflected region cancels that from the original domain, yielding zero. A more appropriate remedy is to apply the double Fourier-series extension method described in \cite{Merilees01011973,Orszag1974FourierSO}. In this construction, the functions are extended to the periodic domain $(\theta_2,\theta_3)\in[-\pi,\pi)\times[-\pi,\pi)$ by enforcing the parity-dependent transformation $$A(\theta_2,\theta_3)=A(\pi-\theta_2,\theta_3+\pi), \quad \theta_1(\theta_2,\theta_3)=\theta_1(\pi-\theta_2,\theta_3+\pi),$$ on the extended domain. This transformation compensates for the coordinate singularity at the poles by linking reflection in $\theta_2$ with a $\pi$-shift in $\theta_3$. As a result, integration over the extended domain yields exactly twice the value of the integral over the original domain, instead of canceling to zero.

However, an arbitrary pair of functions $A:[-\pi/2,\pi/2]\times[-\pi,\pi) \to \mathbb{C}$ and $\theta_1:[-\pi/2,\pi/2]\times[-\pi,\pi) \to \mathbb{R}$, when extended by this transformation to $[-\pi,\pi)\times[-\pi,\pi)$, will not in general be analytic on extended domain. To ensure analyticity, required for exponential convergence of the trapezoidal rule, we may impose additional mild constraints on the original functions. We outline several strategies:. 

\begin{enumerate}
    \item \textbf{Analytic composition:} Suppose $\tilde{A}:\mathbb{R}^3 \to \mathbb{C}$ and $\tilde{\theta}_1:\mathbb{R}^3 \to \mathbb{R}$ are analytic in all arguments. Define $A(\theta_2,\theta_3) = \tilde{A}\left(\mathbf{x}(\theta_2,\theta_3)\right)$ and $\theta_1(\theta_2,\theta_3) = \tilde{\theta}_1\!\left(\mathbf{x}(\theta_2,\theta_3)\right)$, where $\mathbf{x}(\theta_2,\theta_3)$ is the usual analytical mapping from $[-\pi/2,\pi/2]\times[-\pi,\pi)$ to the unit sphere $\mathbb{S}^2$. Because a composition of analytic functions is analytic, the resulting functions $A$ and $\theta_1$ are analytic on the integration domain, and their parity-dependent extensions remain analytic on the periodic square.
    \item \textbf{Vanishing intensity at antipodal points:} If the physical model prescribes the intensity of plane waves as a function of direction, the previous strategy may not be applicable. In such cases, analyticity on the extended domain can still be ensured by requiring that both functions $A$ and $\theta_1$ vanish smoothly at a pair of antipodal points on $\mathbb{S}^2$. These two points may then be mapped to $\theta_2=\pm\pi/2$. Under this constraint, the parity-dependent extension becomes both periodic and analytic on the integration domain, allowing exponential convergence of the trapezoidal rule.
    \item \textbf{Enforcing compatibility conditions:} A more general variant of the previous strategy allows $A$ and $\theta_1$ to take non-zero values at $\theta_2=\pm\pi/2$ provided their derivatives satisfy compatibility conditions at the reflection planes. Specifically, analyticity on the extended domain is possible if, for all non-negative integers $n$, $$\frac{\partial^n}{\partial \theta_2^n}   A(\theta_2 = \pm \pi/2,\theta_3) = (-1)^n \frac{\partial^n}{\partial \theta_2^n} A(\theta_2 = \pm \pi/2,\theta_3+\pi),$$ $$\frac{\partial^n}{\partial \theta_2^n}   \theta_1(\theta_2 = \pm \pi/2,\theta_3) = (-1)^n \frac{\partial^n}{\partial \theta_2^n} \theta_1(\theta_2 = \pm \pi/2,\theta_3+\pi).$$ These compatibility conditions ensure that the extended functions join smoothly across the reflection boundaries, required for analyticity.
\end{enumerate}

\begin{figure}[htbp]
  \centering
  \begin{subfigure}[b]{0.45\textwidth}
    \includegraphics[width=\textwidth]{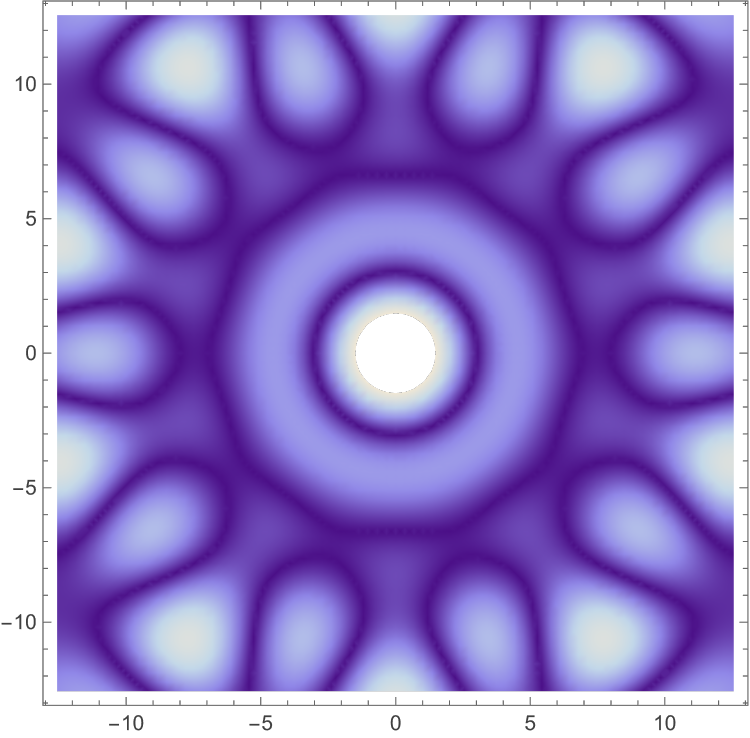}
    \caption{10 plane-waves }
    \label{fig:convergence1anew}
  \end{subfigure}
  \begin{subfigure}[b]{0.45\textwidth}
    \includegraphics[width=\textwidth]{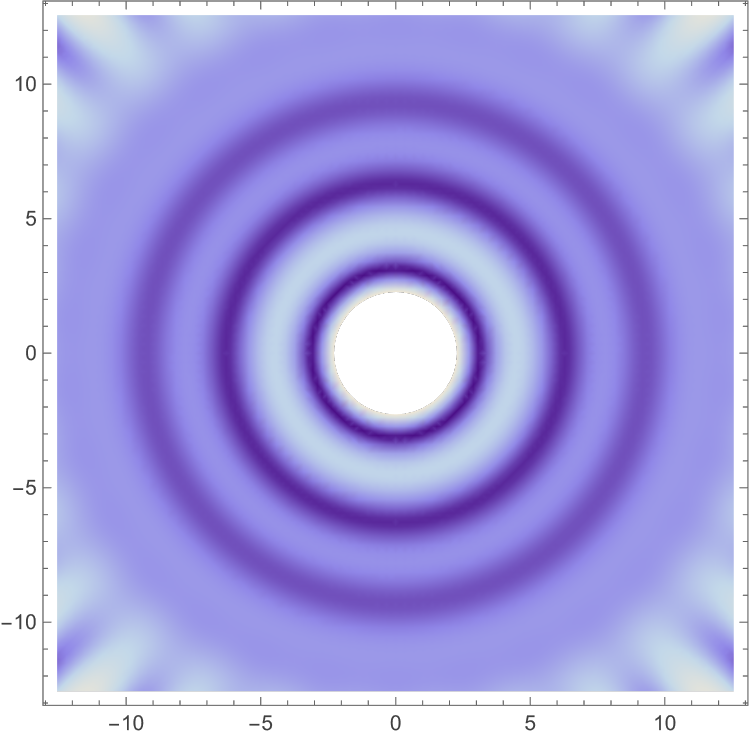}
    \caption{15 plane-waves }
    \label{fig:convergence2anew}
  \end{subfigure}\\
  \begin{subfigure}[b]{0.45\textwidth}
    \includegraphics[width=\textwidth]{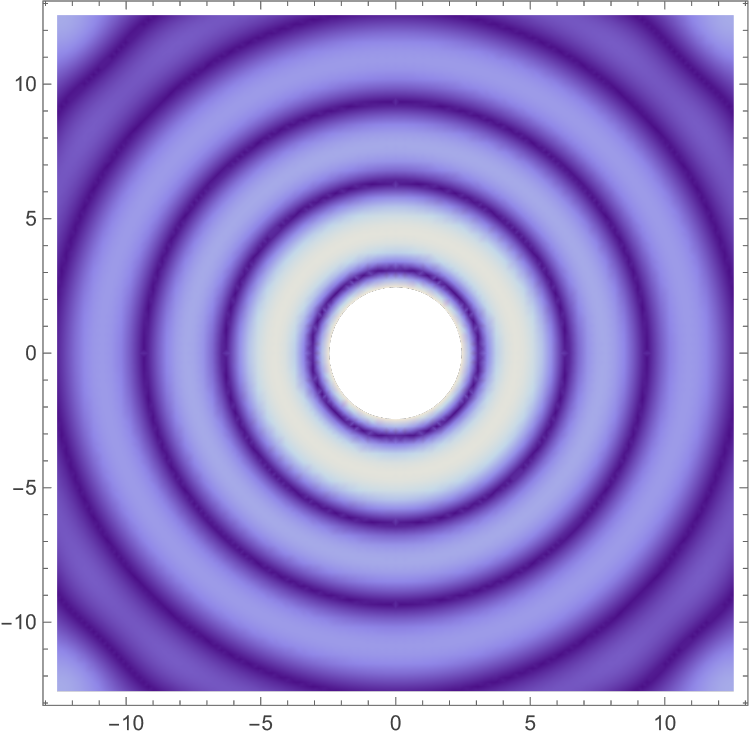}
    \caption{20 plane-waves }
    \label{fig:convergence3anew}
  \end{subfigure}
  \begin{subfigure}[b]{0.45\textwidth}
    \includegraphics[width=\textwidth]{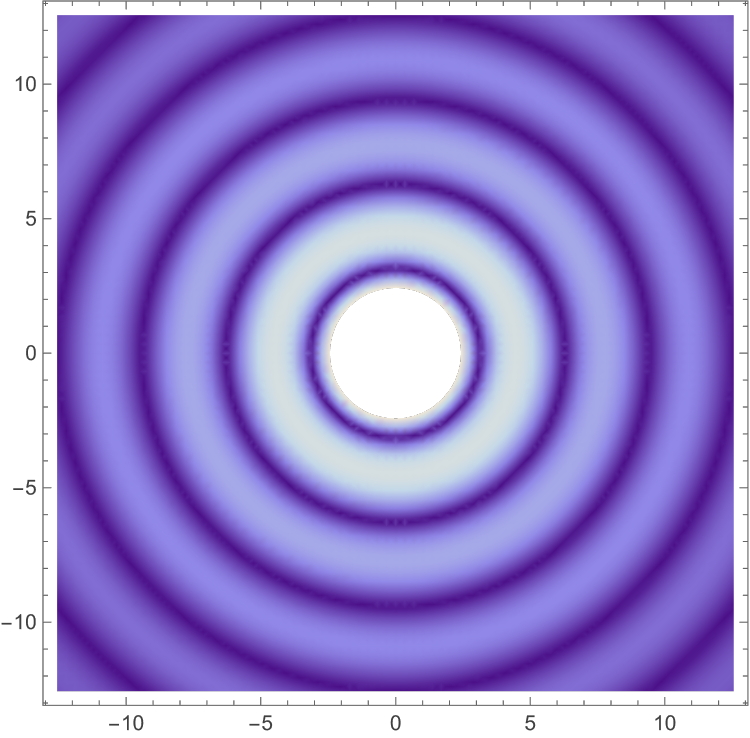}
    \caption{Reference solution}
    \label{fig:convergence4anew}
  \end{subfigure}\\
  \caption{Intensity plots showing trapezoidal rule convergence for a 3D Gaussian beam with nontrivial dependence on angular variables $\theta_2$ and $\theta_3$. White indicates high intensity and blue indicates low intensity. Subfigures (a)–(c) show approximations with increasing plane-wave terms; (d) shows the exact integral solution. The results demonstrate exponential convergence even in the non-degenerate case. Relative wall-clock times (computational costs) are $0.06$, $0.10$, and $0.25$ of the reference solution for $10$, $15$, and $20$ terms, respectively.} 
  \label{fig:GeneralisedTwistedXRayConvergence}
\end{figure}

A typical approximation of the generalized twisted X-ray field is given by:
\begin{align*}
    \mathbf{E}(\mathbf{y}) &= \int_{-\pi}^\pi \int_{-\pi/2}^{\pi/2} A(\theta_2,\theta_3) \mathbf{R}_3 (\theta_3) \mathbf{R}_2 (\theta_2) \mathbf{R}_1 \left(\theta_1(\theta_2,\theta_3)\right) \mathbf{n}_0 e^{i (\mathbf{R}_3 (\theta_3) \mathbf{R}_2 (\theta_2) \mathbf{k}_0 \cdot \mathbf{y})} \mathrm{d}\theta_2 \mathrm{d}\theta_3\\
    & \approx \frac{2\pi}{M}\sum_{\theta_{3_m}} \frac{\pi}{N}\sum_{\theta_{2_n}} A(\theta_{2_n},\theta_{3_m}) \mathbf{R}_3 (\theta_{3_m}) \mathbf{R}_2 (\theta_{2_n}) \mathbf{R}_1 \left(\theta_1(\theta_{2_n},\theta_{3_m})\right) \mathbf{n}_0 e^{i (\mathbf{R}_3 (\theta_{3_m}) \mathbf{R}_2 (\theta_{2_n}) \mathbf{k}_0 \cdot \mathbf{y})}.
\end{align*}
Here, $\theta_{3_m} = m\frac{2\pi}{M}$ and $\theta_{2_n} = n\frac{2\pi}{N}$ with $m = 0,1,\dots,M-1$ and $n = 0,1,\dots,N-1$. All previous examples in the paper reduce to a single angular integration. The Gaussian-beam example shown in \Cref{fig:GeneralisedTwistedXRayConvergence} is deliberately chosen to break this degeneracy. Here the amplitude function
\begin{align*}
    A(\theta_2,\theta_3) = \exp{-\frac{k^2}{w_o^2}\sin^2{\theta_2}}, \quad \theta_1(\theta_2,\theta_3)=0,
\end{align*} 
makes the integrand depend non-trivially on both variables, requiring two integrations in the representation formula. This illustrates that the full two-parameter integral is needed in general, and supports the theoretical result that two angular integrations suffice to represent all time-harmonic electromagnetic fields in $\mathcal{A}^3$. In this case, the integral involves two variables, requiring repeated application of the trapezoidal rule. The integral representation and its trapezoidal approximation offer several advantages:

\begin{itemize}
\item \textbf{Exactness:} The truncated approximation yields a field that \emph{exactly} satisfies the time-harmonic Maxwell equations in free space.
\item \textbf{Efficiency:} The trapezoidal rule provides a computationally inexpensive and exponentially convergent numerical approximation.
\item \textbf{Physical interpretability :} Each discrete term corresponds to a classical plane wave with a specific direction and phase, making the approximation directly implementable using standard plane-wave sources.
\end{itemize}

In the following subsections, we explore how these approximations can be applied to physically relevant problems, including radiation design and scattering analysis.

\subsection{Approximate solution of boundary value problems}
It has been shown that any bounded time-harmonic solution to Maxwell's equations in free space can be expressed using the integral formulation in \Cref{eq:FinalSolution}. This implies that any desired radiation field can, in principle, be designed using only plane-wave sources. However, in practice, it is not feasible to use an infinite number of plane waves. 
  The trapezoidal rule, discussed earlier, provides a practical way to overcome this limitation: by approximating the integral with a finite number of terms, we can construct radiation fields that closely match the desired behavior, while acknowledging that the discretization introduces some error. 

\begin{figure}[htbp]
    \centering
    \includegraphics[width=0.5\linewidth]{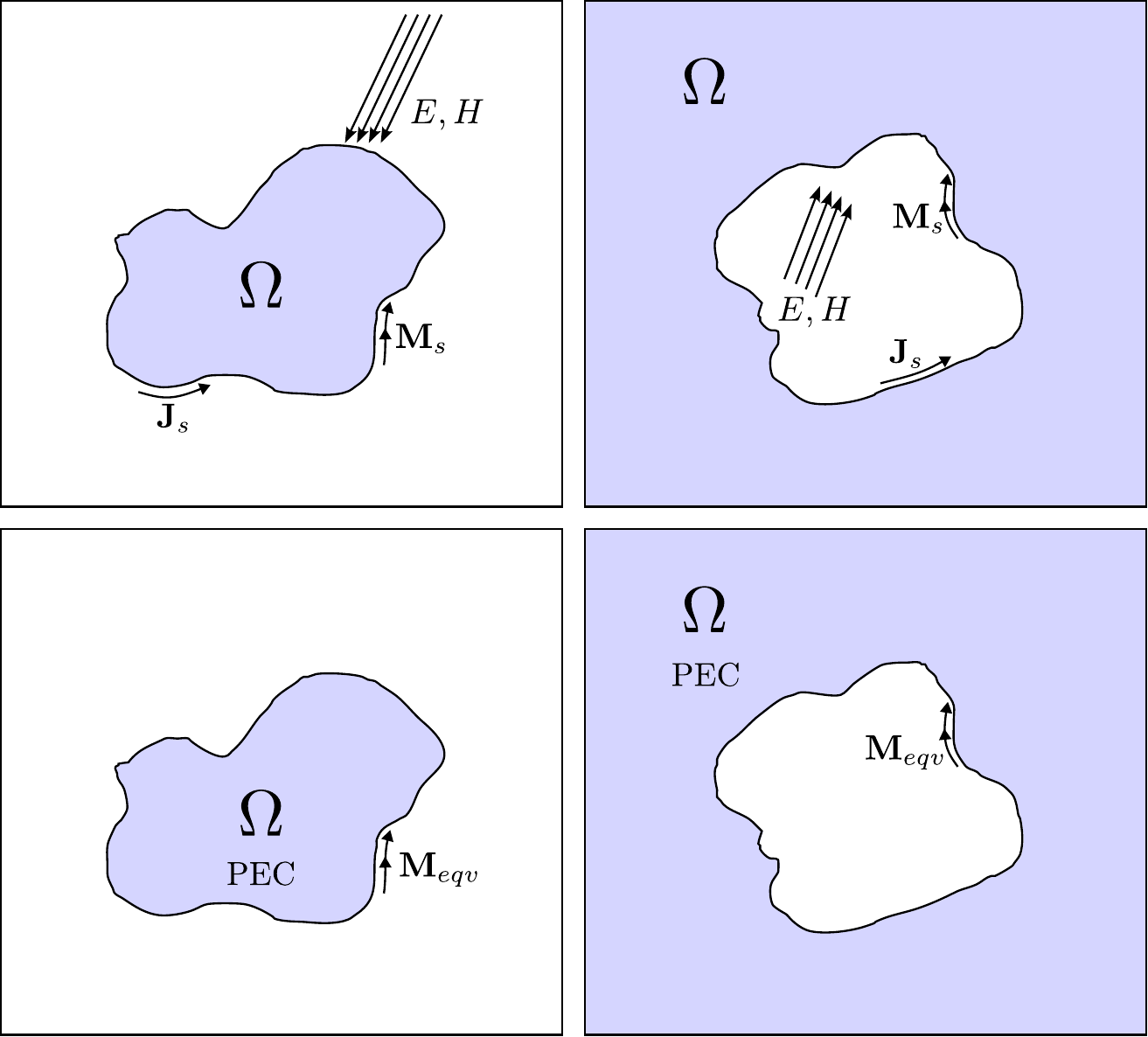}
    \caption{Schematic illustrating a typical interior (top-left) and exterior (top-right) boundary value problem. In both cases, the region of interest is denoted by $\Omega$. The bottom panels show the corresponding surface-equivalent formulations under a perfect electric conductor (PEC) boundary, following Schelkunoff's surface equivalence principle. The equivalent magnetic surface current $\mathbf{M}_{eqv}$ is placed on the boundary to reproduce the same fields in $\Omega$.  Prescribing $\mathbf{M}_{eqv}$ is equivalent to prescribing the tangential electric field on the boundary for both interior and exterior problems.}
    \label{fig:InteriorExteriorProblem}
\end{figure}

In general, a well-defined boundary value problem involves specifying one tangential field, either electric or magnetic, on the boundary. For an interior boundary value problem, this condition is sufficient. However, for an exterior boundary value problem, an additional condition on the field at infinity is required to ensure the uniqueness of the solution. Using the surface equivalence principle \cite{Stratton_Chu_1939,Schelkunoff1936}, the source terms and material properties in the exterior (or interior) region can be replaced by equivalent surface magnetic and electric currents. These equivalent currents generate the same fields in the region of interest. Thus, the boundary value problem can, in principle, be reformulated by prescribing an equivalent tangential electric field on the boundary (See \Cref{fig:InteriorExteriorProblem}).

We now show that it is possible to match the electric field \emph{exactly} at an arbitrarily large but finite number of points using a sufficient number of plane-waves. This is stronger than simply matching the boundary condition at a finite number of locations in two important ways. First, the matching points may lie anywhere in the domain, not only on the boundary. Second, all three components of the electric field can be matched at each point, not just the tangential components.

\begin{theorem}
    For any prescribed electric field values at finitely many points in a source-free homogeneous domain, there exists a finite superposition of plane waves whose electric field matches those prescribed values exactly, provided sufficiently many plane waves are included in the superposition.
\end{theorem}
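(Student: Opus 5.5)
The plan is to reduce the statement to a finite-dimensional linear algebra problem. Fix distinct points $\mathbf{x}_1,\dots,\mathbf{x}_P$ in the domain and prescribed values $\mathbf{v}_1,\dots,\mathbf{v}_P\in\mathbb{C}^3$. Consider the linear map
\begin{align*}
\mathcal{L}:\ \mathbf{a}(\hat{\mathbf{k}}) \mapsto \Big(\textstyle\int_{\mathbb{S}^2} \mathbf{a}(\hat{\mathbf{k}})\, e^{i(\omega/c)\hat{\mathbf{k}}\cdot\mathbf{x}_p}\,\mathrm{d}\hat{\mathbf{k}}\Big)_{p=1}^{P}\in\mathbb{C}^{3P},
\end{align*}
defined on tangential amplitude fields, i.e.\ those with $\mathbf{a}(\hat{\mathbf{k}})\cdot\hat{\mathbf{k}}=0$. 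Restricting to finitely supported sums of Dirac masses, this map sends a finite superposition of transverse plane waves, which are exactly the terms allowed in \eqref{eq:FinalSolution}, to its values at the $P$ points. The image of the finite sums is a subspace $V\subseteq\mathbb{C}^{3P}$. Because $\mathbb{C}^{3P}$ is finite dimensional, if $V=\mathbb{C}^{3P}$ then some finite sub-collection of at most $3P$ plane waves already spans it. That finite collection matches every prescribed data set exactly, and adding further waves with zero amplitude changes nothing. So the whole theorem reduces to proving $V=\mathbb{C}^{3P}$.

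To prove surjectivity I will argue by duality. Suppose $\mathbf{w}=(\mathbf{w}_1,\dots,\mathbf{w}_P)$ annihilates $V$. Then for every direction $\hat{\mathbf{k}}\in\mathbb{S}^2$ and every polarization $\mathbf{n}\perp\hat{\mathbf{k}}$,
\begin{align*}
\mathbf{n}\cdot\mathbf{g}(\hat{\mathbf{k}})=0,\qquad \mathbf{g}(\hat{\mathbf{k}}):=\textstyle\sum_{p}\overline{\mathbf{w}_p}\,e^{i(\omega/c)\hat{\mathbf{k}}\cdot\mathbf{x}_p}.
\end{align*}
This means $\mathbf{g}(\hat{\mathbf{k}})$ is parallel to $\hat{\mathbf{k}}$ for all $\hat{\mathbf{k}}$, so $\mathbf{g}(\mathbf{k})=\phi(\mathbf{k})\,\mathbf{k}$ on the sphere $|\mathbf{k}|=\omega/c$. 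Here $\mathbf{g}$ extends to an entire exponential polynomial in $\mathbf{k}\in\mathbb{C}^3$.

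The task is then to show this forces $\mathbf{w}=0$. I would take the cross product with $\mathbf{k}$, which gives $\mathbf{k}\times\mathbf{g}(\mathbf{k})=0$ on the real sphere. Since the real sphere is a uniqueness set for entire functions that vanish on the complex quadric $\{\mathbf{k}\cdot\mathbf{k}=\omega^2/c^2\}$, the identity extends to that whole complex quadric. On the quadric I would then exploit the distinct exponents $\mathbf{x}_p$. Choose directions $\mathbf{k}=s\mathbf{u}+\mathbf{v}$ with $\mathbf{u}$ a complex null vector and $\mathbf{v}$ fixed so the point stays on the quadric, and let $s\to\infty$ along a suitable complex ray. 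Along such a ray a single exponential $e^{is\mathbf{u}\cdot\mathbf{x}_p}$ dominates the others, for a generic choice of $\mathbf{u}$ separating the points. The polynomial prefactor $s\mathbf{u}+\mathbf{v}$ then yields $\mathbf{u}\times\overline{\mathbf{w}_p}+\mathbf{v}\times\overline{\mathbf{w}_p}/s\to0$.

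Using several null vectors $\mathbf{u}$, I conclude $\overline{\mathbf{w}_p}\parallel\mathbf{u}$ for independent $\mathbf{u}$, so $\mathbf{w}_p=0$. Peeling off the dominant term inductively gives $\mathbf{w}=0$, hence $V=\mathbb{C}^{3P}$.

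The main obstacle is this last step. I need to rigorously separate the exponentials $e^{i\mathbf{k}\cdot\mathbf{x}_p}$ while handling the transversality constraint, which removes one polarization per direction. A naive scalar argument only shows that the exponentials are linearly independent on the sphere, and that does not immediately give control of all three vector components. If the complex-ray asymptotics become messy, my fallback is more direct. Write $\mathbf{g}=\phi\,\mathbf{k}$ and apply a divergence-type identity: the scalar function $\sum_p\overline{\mathbf{w}_p}\cdot\mathbf{y}\,e^{i\mathbf{k}\cdot\mathbf{x}_p}$ restricted to the sphere has Fourier transform supported at the points $\mathbf{x}_p$. From this I would argue that each $\overline{\mathbf{w}_p}$ must be both parallel to and orthogonal to the generic directions, giving a contradiction unless $\mathbf{w}_p=0$.
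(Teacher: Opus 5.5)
Your proposal is correct and takes a genuinely different route from the paper. The paper fixes the trapezoidal-grid directions $\mathbf{R}_j\mathbf{k}_0$ and splits each amplitude into the two polarizations $\mathbf{n}_0,\mathbf{n}_1$. It writes the matching conditions as $\boldsymbol{\mathcal{E}}=\boldsymbol{\mathcal{P}}\boldsymbol{\mathcal{A}}$ with $\boldsymbol{\mathcal{P}}\in\mathbb{C}^{3m\times 2n}$, and then \emph{asserts} that $\boldsymbol{\mathcal{P}}\boldsymbol{\mathcal{P}}^H$ is ``generally'' of full rank for distinct points when $2n>3m$. The Moore--Penrose formula then gives explicit minimum-norm amplitudes. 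That rank property is the real content of the theorem, and the paper does not prove it; your duality argument does. An annihilator $\mathbf{w}$ forces $\mathbf{k}\times\mathbf{g}(\mathbf{k})=0$ on the real sphere. This identity extends to the connected complex quadric $\mathbf{k}\cdot\mathbf{k}=\omega^2/c^2$, because the real sphere is a maximal totally real submanifold of it. Restricting to a complex line $s\mathbf{u}+\mathbf{v}$ inside the quadric then separates the exponentials. The paper's version gives an explicit constructive formula tied to a prescribed grid, but only conditionally on the rank claim. Yours gives an actual proof that some family of at most $3P$ plane waves spans. Since full rank is an open condition, every sufficiently fine grid of directions (with both polarizations at each node) then also works, which recovers the paper's setting.

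One step should be tightened. ``Using several null vectors $\mathbf{u}$'' is delicate, because which exponential dominates depends on both $\mathbf{u}$ and the ray. You need neither several $\mathbf{u}$ nor asymptotics. Take $\mathbf{u}\cdot\mathbf{u}=0$, $\mathbf{u}\cdot\mathbf{v}=0$, $\mathbf{v}\cdot\mathbf{v}=\omega^2/c^2$, with $\mathbf{u}$ chosen off the finitely many planes $\mathbf{u}\cdot(\mathbf{x}_p-\mathbf{x}_q)=0$. This is possible because the null cone is irreducible and lies in no plane. The identity $\sum_p e^{i\mathbf{v}\cdot\mathbf{x}_p}\bigl(s\,\mathbf{u}\times\overline{\mathbf{w}_p}+\mathbf{v}\times\overline{\mathbf{w}_p}\bigr)e^{is\,\mathbf{u}\cdot\mathbf{x}_p}\equiv 0$ holds for all $s\in\mathbb{C}$. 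Linear independence of the functions $s^m e^{c_p s}$ for distinct $c_p$ then gives $\mathbf{u}\times\overline{\mathbf{w}_p}=\mathbf{v}\times\overline{\mathbf{w}_p}=0$ for every $p$. Hence $\overline{\mathbf{w}_p}\in\mathbb{C}\mathbf{u}\cap\mathbb{C}\mathbf{v}=\{0\}$, since $\mathbf{v}$ is not null. This handles all points at once, needs no induction, and makes your fallback argument unnecessary.
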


\begin{proof}
Consider constructing the required electric field $\mathbf{E}(\mathbf{x})$ as a superposition of plane waves that matches a prescribed field values $\tilde{\mathbf{E}}(\mathbf{x}_i)$ at points $\mathbf{x}_i$,  $i=1,2,\dots,m$. We approximate the integral representation \Cref{eq:FinalSolution} using the trapezoidal rule to construct $\mathbf{E}(\mathbf{x})$:

\begin{align}
    \mathbf{E}(\mathbf{x}) &= \frac{2\pi}{M}\sum_{\theta_{3_m}} \frac{\pi}{N}\sum_{\theta_{2_n}} A(\theta_{2_n},\theta_{3_m}) \mathbf{R}_3 (\theta_{3_m}) \mathbf{R}_2 (\theta_{2_n}) \mathbf{R}_1 \left(\theta_1(\theta_{2_n},\theta_{3_m})\right) \mathbf{n}_0 e^{i (\mathbf{R}_3 (\theta_{3_m}) \mathbf{R}_2 (\theta_{2_n}) \mathbf{k}_0 \cdot \mathbf{x})}
\end{align}

For brevity, we condense the notation by defining a single index $j$ for each plane wave: $A_j = \frac{2\pi}{M} \frac{\pi}{N} A(\theta_{2_n},\theta_{3_m}), \quad \theta_{1_j}=\theta_1(\theta_{2_n},\theta_{3_m}), \quad \mathbf{R}_j=\mathbf{R}_3 (\theta_{3_m}) \mathbf{R}_2 (\theta_{2_n})$. Matching the field  values at each point $\mathbf{x}_i$ requires: 

\begin{align}
    \tilde{\mathbf{E}}(\mathbf{x}_i) &= \sum_{j=1}^n A_j \mathbf{R}_j \mathbf{R}_1(\theta_{1_j}) \mathbf{n}_0 e^{i \mathbf{R}_j \mathbf{k}_0 \cdot \mathbf{x}_i}, \qquad \forall \mathbf{x}_i \in \{ \mathbf{x}_1,\mathbf{x}_2,\dots,\mathbf{x}_m \}.
\end{align}

Here, $\mathbf{R}_j$ and $\mathbf{k}_0$ are fixed, and $\mathbf{n}_0$ is orthogonal to the axis of rotation $\mathbf{R}_1$. The free parameters are complex amplitudes $A_j$ and polarization angles $\theta_{1_j}$. Notably, $A_j \mathbf{R}_1(\theta_{1_j}) \mathbf{n}_0$ can be expressed as:
\begin{align}
    A_j \mathbf{R}_1(\theta_{1_j}) \mathbf{n}_0 = \alpha_j \mathbf{n}_0 + \beta_j \mathbf{n}_1,
\end{align}
where $\mathbf{n}_1 = \mathbf{R}_1(\pi/2) \mathbf{n}_0$ is orthogonal to both $\mathbf{n}_0$ and the rotation axis, and $\alpha_j, \beta_j \in \mathbb{C}$ are uniquely determined by $A_j$ and $\theta_{1_j}$. This reduces the matching condition to a linear algebra problem:
\begin{align}
    \mathcal{E}_i &= \sum_{j=1}^{2n} \mathcal{P}_{ij} \mathcal{A}_j, \qquad \forall i=1,2,\dots,3m.
\end{align}
Or in matrix form:
\begin{align}
\boldsymbol{\mathcal{E}} &=  \boldsymbol{\mathcal{P}} \boldsymbol{\mathcal{A}}.
\end{align}

Here,
\begin{itemize}
    \item $\boldsymbol{\mathcal{E}}\in\mathbb{C}^{3m}$ contains the target field values,
    \item $\mathcal{A}\in\mathbb{C}^{2n}$ contains the unknowns $\alpha_j$ and $\beta_j$,
    \item $\boldsymbol{\mathcal{P}}\in\mathbb{C}^{3m\times2n}$ contains the rotated plane-wave evaluated at the sampling points, with $\mathcal{P}_{ij} = \mathbf{R}_j \mathbf{R}_1(\theta_{1_j}) \mathbf{n}_0 e^{i \mathbf{R}_j \mathbf{k}_0 \cdot \mathbf{x}_i}$ and $\theta_{1_j}\in\{0,\pi/2\}$.
\end{itemize}

A solution exists whenever $2n > 3m$ \cite{strang_linear_2006} and $\mathcal{P} \mathcal{P}^H$ is full rank. $\mathcal{P} \mathcal{P}^H$ is generally full rank for distinct sampling points $\mathbf{x}_i$. The minimum norm solution is given by the Moore-Penrose pseudoinverse:
\begin{align}
\boldsymbol{\mathcal{A}} &=  \boldsymbol{\mathcal{P}}^H \left( \boldsymbol{\mathcal{P}} \boldsymbol{\mathcal{P}}^H \right)^{-1} \boldsymbol{\mathcal{E}},
\end{align}
where $\boldsymbol{\mathcal{P}}^H$ denotes the conjugate transpose of $\boldsymbol{\mathcal{P}}$. The pseudoinverse yields the smallest energy configuration of complex amplitudes needed to reproduce the desired field at the specified points. 
\end{proof}

\subsection{Interference patterns for the truncated icosahedron}

Another application of this work is the identification of  electromagnetic fields that produce discrete diffraction patterns when interacting with materials exhibiting specific symmetry. A standard plane-wave produces such discrete patterns when interacting with crystals. For materials with helical symmetry, a corresponding radiation design is provided in \cite{friesecke_twisted_2016}. While the full inverse problem remains challenging for other symmetry groups, one can nonetheless construct incoming radiation fields that generate constructive interference in targeted directions. This section focuses on the case of a truncated-icosahedral structure and demonstrates how generalized twisted X-rays can be tailored to maximize intensity along a prescribed direction.

Consider a structured material illuminated by an engineered time-harmonic field $\mathbf{E}_0(\mathbf{y}) e^{-i \omega t}$. Let $\rho(\mathbf{y})$ denote the electronic density of a material. In the far-field approximation \cite{friesecke_twisted_2016}, the outgoing electric and magnetic field can be given by:

\begin{align*}
    \mathbf{E}_{out}(\mathbf{x},t) &= -c_{el} \frac{e^{i(\mathbf{k}'(\mathbf{x})\cdot\mathbf{x}-\omega t)}}{|\mathbf{x}-\mathbf{x}_c|} \left( \mathbf{I} - \frac{\mathbf{k}'(\mathbf{x})}{|\mathbf{k}'(\mathbf{x})|} \otimes \frac{\mathbf{k}'(\mathbf{x})}{|\mathbf{k}'(\mathbf{x})|}\right) \int_{\mathbb{R}^3} \mathbf{E}_0 (\mathbf{y}) \rho (\mathbf{y}) e^{-i\mathbf{k}'(\mathbf{x})\cdot\mathbf{y}} \mathrm{d}\mathbf{y},\\
    \mathbf{B}_{out}(\mathbf{x},t) &= \frac{1}{\omega} \mathbf{k}' (\mathbf{x}) \times \mathbf{E}_{out}(\mathbf{x},t),
\end{align*}
where the outgoing wavevector is $\mathbf{k}'(\mathbf{x})=\frac{\omega}{c}\frac{\mathbf{x}-\mathbf{x}_c}{|\mathbf{x}-\mathbf{x}_c|}$, $c_{el}$ is constant, $\omega$ is the angular frequency of the incoming radiation, $\mathbf{x}_c$ is a reference point inside the sample. To maximize the intensity of the outgoing field in specific direction $\mathbf{k}'(\mathbf{x}_0)$, we evaluate the electric field at $\mathbf{x}_0$:
\begin{align*}
    \mathbf{E}_{out} (\mathbf{x}_0,t) = e^{-i \omega t} c(\mathbf{x}_0) \mathbf{P}_{\mathbf{k}'(\mathbf{x}_0)} \int_{\mathbb{R}^3} \mathbf{E}_0 (\mathbf{y}) \rho(\mathbf{y}) e^{-i\mathbf{k}'(\mathbf{x}_0)\cdot\mathbf{y}} \mathrm{d}\mathbf{y},
\end{align*}
where $c(\mathbf{x}_0) = -c_{el} \frac{e^{i\mathbf{k}'(\mathbf{x}_0)\cdot\mathbf{x}_0}}{|\mathbf{x}_0-\mathbf{x}_c|}$ and $\mathbf{P}_{\mathbf{k}'(\mathbf{x}_0)} = \left( \mathbf{I} - \frac{\mathbf{k}'(\mathbf{x}_0)}{|\mathbf{k}'(\mathbf{x}_0)|} \otimes \frac{\mathbf{k}'(\mathbf{x}_0)}{|\mathbf{k}'(\mathbf{x}_0)|}\right)$ is the projection tensor onto the plane orthogonal to the outgoing wavevector. Substituting the generalized twisted X-ray representation for $\mathbf{E}_0$ yields
\begin{align*}
    \mathbf{E}_{out} (\mathbf{x}_0) & = c(\mathbf{x}_0) \mathbf{P}_{\mathbf{k}'(\mathbf{x_0})} \int_{\mathbb{R}^3}\int_{-\pi}^\pi \int_{-\pi/2}^{\pi/2} A(\theta_2,\theta_3) \mathbf{R}_3 (\theta_3) \mathbf{R}_2 (\theta_2) \mathbf{R}_1 \left(\theta_1(\theta_2,\theta_3)\right) \mathbf{n}_0 \\&\qquad \rho (\mathbf{y})e^{i\left(\mathbf{R}_3 (\theta_3) \mathbf{R}_2 (\theta_2) \mathbf{k}_0 - \mathbf{k}'(\mathbf{x}_0) \right)\cdot\mathbf{y}} \mathrm{d}\theta_2 \mathrm{d}\theta_3 \mathrm{d}\mathbf{y}. \\
\end{align*}

When only a finite number of plane wave sources are used, the outgoing field becomes the discrete sum:
\begin{align*}
    \mathbf{E}_{out} (\mathbf{x}_0) & =c(\mathbf{x}_0)  \frac{2\pi}{M}\sum_{\theta_{3_m}} \frac{\pi}{N}\sum_{\theta_{2_n}} A(\theta_{2_n},\theta_{3_m}) \mathbf{P}_{\mathbf{k}'(\mathbf{x_0})}\mathbf{R}_3 (\theta_{3_m}) \mathbf{R}_2 (\theta_{2_n}) \mathbf{R}_1 \left(\theta_1(\theta_{2_n},\theta_{3_m})\right) \mathbf{n}_0 \\&\qquad \mathcal{F}(\rho )\left(\mathbf{R}_3 (\theta_{3_m}) \mathbf{R}_2 (\theta_{2_n}) \mathbf{k}_0 - \mathbf{k}'(\mathbf{x}_0) \right), \\
\end{align*}
where $\mathcal{F}(\rho) (\mathbf{k})$ denotes Fourier transform of the electronic density. We model $\rho$ as a sum of Dirac delta distributions located at a vertices of a truncated icosahedron (a buckyball-type structures). These vertices are even permutations of
\begin{align*}
    (0,\pm 1,\pm 3\varphi), \quad (\pm 1 ,\pm(2+ \varphi), \pm 2 \varphi), \quad (\pm 2, \pm (1+2\varphi) ,\pm \varphi),
\end{align*}
where $\varphi = (1+\sqrt{5})/2$ is the golden ratio. The objective is to choose complex amplitudes $A(\theta_{2_n},\theta_{3_m})$ to maximize the intensity, proportional to $|\mathbf{E}_{out}(\mathbf{x}_0)|$, subject to a physical constraint $|A(\theta_{2_n},\theta_{3_m})| \leq1$, which corresponds to a bound on the individual plane-wave energies. To maximize intensity along a given  direction, say $\mathbf{k}'(\mathbf{x}_0)=|\mathbf{k}'(\mathbf{x}_0)|\mathbf{e}_2$, we apply the method developed in \Cref{sec:MaxMag}. In this configuration:
\begin{itemize}
    \item the resulting vectors lie in the $\mathbf{e}_1 - \mathbf{e}_3$ plane due to the projection tensor, and
    \item they are real-valued because for every vertex $\mathbf{x}$, there exists a corresponding vertex at $-\mathbf{x}$, ensuring $\mathcal{F}(\rho) (\mathbf{k})\in \mathbb{R}$. 
\end{itemize}

The method in \Cref{sec:MaxMag} provides only an implicit optimality condition, and therefore we require additional information about the direction of the resultant outgoing electric field $\mathbf{E}_{out} (\mathbf{x}_0)$ in order to determine the maximum achievable intensity and the corresponding amplitudes $A(\theta_{2_n},\theta_{3_m})$. This direction can be identified by evaluating the intensity as a function of the orientation of $\mathbf{E}_{out} (\mathbf{x}_0)$ within the $\mathbf{e}_1$-$\mathbf{e_3}$ plane, which is the range of the projection tensor. For the truncated-icosahedron configuration defined by the coordinates above, we find that the maximum intensity along the $\mathbf{e}_2$ axis is attained when the outgoing field $\mathbf{E}_{out} (\mathbf{x}_0)$ aligns with the $\mathbf{e}_1$ direction. This rule is independent of the discretization parameters $N$ and $M$, enabling us to propose the corresponding incoming radiation in the continuous limit:

\begin{align*}
    \mathbf{E}_0(\mathbf{y}) &= \int_{-\pi}^\pi \int_{-\pi/2}^{\pi/2} A(\theta_2,\theta_3) \mathbf{R}_3 (\theta_3) \mathbf{R}_2 (\theta_2) \mathbf{R}_1 \left(\theta_1(\theta_2,\theta_3)\right) \mathbf{n}_0 e^{i (\mathbf{R}_3 (\theta_3) \mathbf{R}_2 (\theta_2) \mathbf{k}_0 \cdot \mathbf{y})} \mathrm{d}\theta_2 \mathrm{d}\theta_3,
\end{align*}
with,
\begin{align*}
    A\left( \theta_2,\theta_3\right) = \operatorname{sign}\left(\mathbf{e}_1 \cdot \mathbf{P}_{\mathbf{k}'(\mathbf{x_0})}\mathbf{R}_3 (\theta_3) \mathbf{R}_2 (\theta_2) \mathbf{R}_1 \left(\theta_1(\theta_2,\theta_3)\right) \mathbf{n}_0  \mathcal{F}(\rho )\left(\mathbf{R}_3 (\theta_3) \mathbf{R}_2 (\theta_2) \mathbf{k}_0 - \mathbf{k}'(\mathbf{x}_0) \right)  \right),
\end{align*}
where $\mathbf{x}_0$ lies along $\mathbf{e}_2$. Although the exact position of $\mathbf{x}_0$ affects only the overall scaling factor  $c(\mathbf{x}_0)$, the angular distribution remains unchanged. The resulting intensity pattern is shown in \Cref{fig:IntensityOfTruncatedIcosahedron} for a particular frequency $\omega$.

\begin{figure}[htbp]
  \centering
  \begin{subfigure}[b]{0.48\textwidth}
    \includegraphics[width=\textwidth]{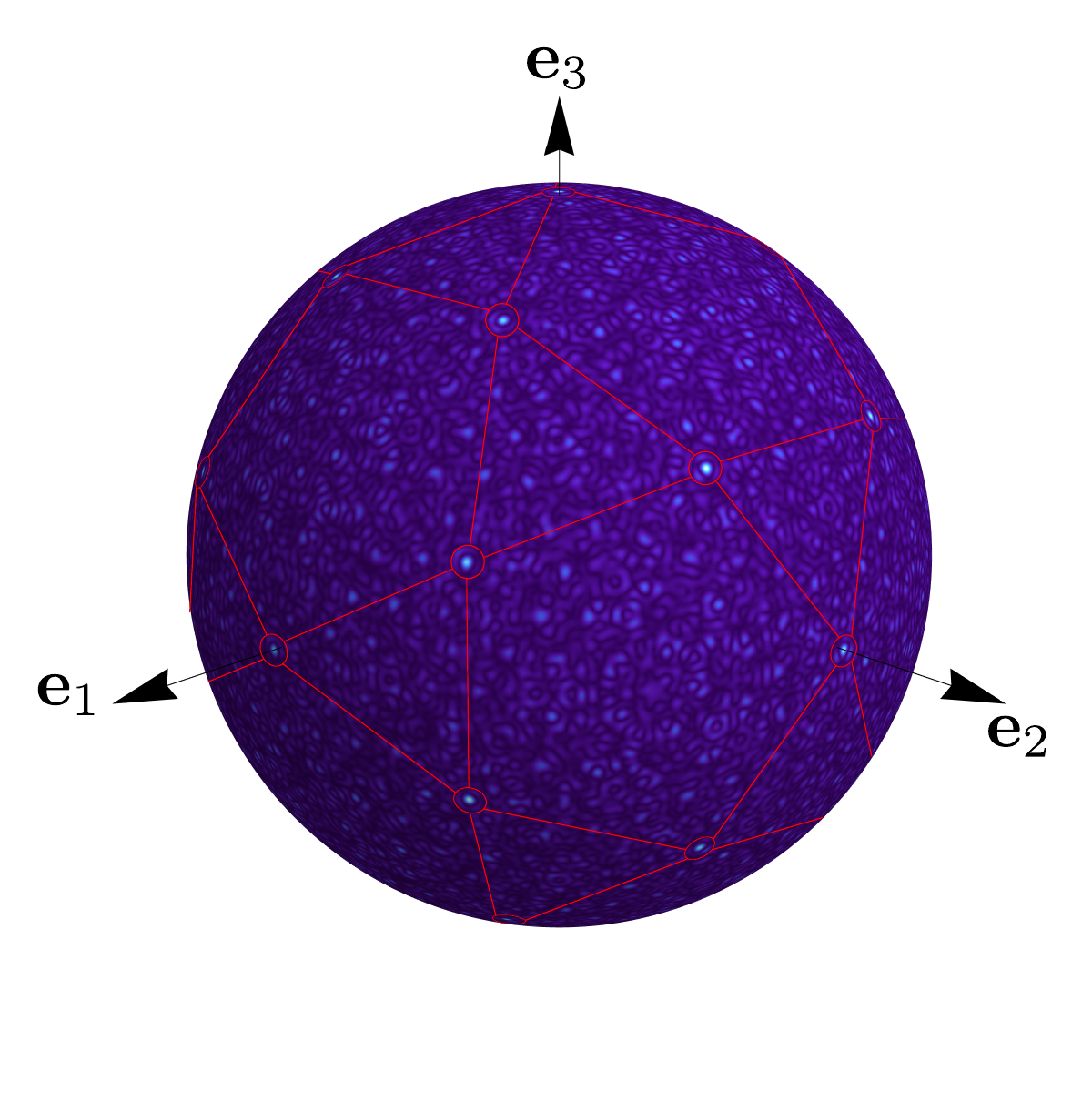}
    \caption{Truncated icosahedral structure}
    \label{fig:IntensityOfTruncatedIcosahedron}
  \end{subfigure}
  \hfill
  \begin{subfigure}[b]{0.48\textwidth}
    \includegraphics[width=\textwidth]{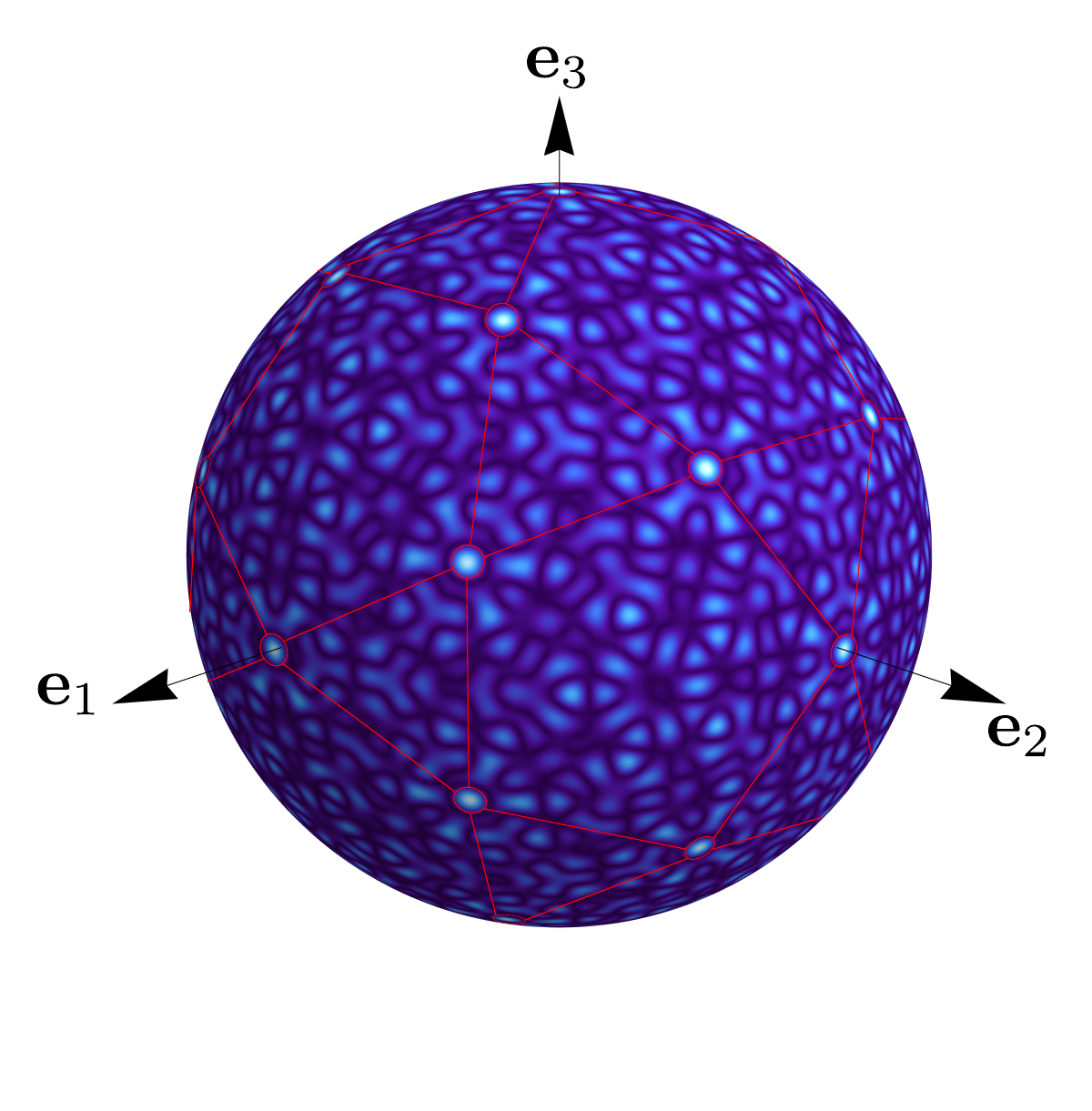}
    \caption{Icosahedral structure}
    \label{fig:IntensityOfIcosahedron}
  \end{subfigure}
  \caption{Simulated outgoing intensity (with red lines overlaid) resulting from interaction with (a) a truncated icosahedron and (b) an icosahedron. The incoming radiation is optimized to maximize intensity along the $\mathbf{e}_2$ axis. White regions correspond to high intensity and blue regions to low intensity. The strong bright spots align approximately with the vertices of an icosidodecahedron, whose edges are indicated by the red lines. The strong bright spots reflect the underlying symmetry of the structure.}
  \label{fig:CombinedIntensityPlots}
\end{figure}

The white regions corresponds to high intensity, while the blue regions corresponds to the low intensity. Although the incoming radiation is optimized to maximize intensity only along $\mathbf{e}_2$ axis, the underlying symmetry of the structure and its Fourier transform leads to additional bright regions along other directions. These bright regions cluster around the vertices of an icosidodecahedron, given (up to scaling) by the even permutations of
$$\left(0,0,\pm \varphi\right), \quad \left(\pm \frac{\varphi^2}{2},\pm \frac{\varphi}{2},\pm \frac{1}{2} \right).$$

To further illustrate the affect of symmetry, we repeat the analysis using an icosahedral structure, with electronic density modeled by Dirac delta distributions at the even permutations of $\left(0,\pm 1, \pm \varphi \right)$). The resulting intensity pattern is shown in \Cref{fig:IntensityOfIcosahedron}. These intensity patterns may reveal structural information if the corresponding inverse problem is solved: for example, determining whether the material exhibits icosahedral, truncated-icosahedral or icosidodecahedral geometry, identifying the orientation of the structure, and estimating the atomic bond lengths from the observed intensity distribution. We emphasize, however, that the inverse problem itself is not  solved in this work.

\section{Other relevant problems}\label{sec:OtherRelevantProblem}
The representation formulas developed in this paper are derived for the electric and magnetic fields governed by Maxwell's equations. However, the governing differential equations—specifically the Helmholtz equations—also appear in a wide range of other physical problems. Many wave phenomena, including those in acoustic and elasticity, are governed by equations that reduce to scalar or vector Helmholtz equations when seeking time-harmonic solutions.

This connection allows us to extend the use of our representation formulas beyond electromagnetism. In the following subsections, we present two specific examples that illustrate how these formulas can be adapted to solve problems in other domains. These examples include spherical waves, commonly used to model radiation from a point source in acoustic, and elastic waves, which describe deformation in solid media.

\subsection{Spherical waves}
The pressure waves, such as sound waves, in a homogeneous, isotropic medium are governed by the scalar wave equation, which can be derived from the Euler's equations for fluid dynamics. The scalar wave equation reduces to the scalar Helmholtz equation when seeking time-harmonic solutions: 

\begin{align}
    \nabla\cdot\nabla p_0(\mathbf{x}) = -(\omega/v)^2 p_0(\mathbf{x}),
\end{align}
where $p(\mathbf{x},t) = e^{-i \omega t} p_0(\mathbf{x})$ is the pressure field and $v$ is the acoustic velocity in the medium. Using the approach developed in the paper, the solution to the scalar Helmholtz equation can be expressed as a superposition of plane waves:

\begin{align}
    p_0(\mathbf{x}) = \int_{-\pi}^{\pi} \int_{-\pi/2}^{\pi/2} a(\theta_3,\theta_2) e^{i \mathbf{R}_3(\theta_3) \mathbf{R}_2(\theta_2) \mathbf{k}.\mathbf{x}} \mathrm{d}\theta_2 \mathrm{d}\theta_3,
\end{align}
where $a(\theta_3,\theta_2)$ represents the amplitude of the pressure wave travelling in the direction defined by the polar and azimuthal angles $\theta_2$ and $\theta_3$, respectively. The rotation tensors $\mathbf{R}_2$ and $\mathbf{R}_3$ are defined about orthogonal axes $\mathbf{e}_2$ and $\mathbf{e}_3$, respectively, and $\mathbf{k}= (\omega/v) \mathbf{e}_2 \times \mathbf{e}_3$. Such solutions were previously discussed in \cite{Whittaker1903}, which also addressed vector Helmholtz equations. However, in electromagnetism, the additional divergence-free condition introduces constraints that we
have overcome in \Cref{sec:GeneralizationOfTwistedXrays}.

A particularly important solution to the scalar Helmholtz equation is the spherical wave, also known as the fundamental solution: 

\begin{align}\label{eq:FundamentalSolution}
    \Phi(r) = \frac{1}{4\pi} \frac{e^{i(\omega/v) r}}{r}
\end{align}
This solution is widely used in Green's function representations of the scalar Helmholtz equation and is often referred to as a Helmholtz representation, or Green's formula \cite{colton2013inverse}. In \cite{devaney_multipole_1974}, it is shown that these solutions can be expressed as a superposition of plane waves, provided the wave vectors are allowed to be complex, a topic discussed in \Cref{sec:ComplexWaveVector}. Such solutions can be represented by \Cref{eq:HelmholtzScalarSolutionDissipative} by choosing $\hat{u}$ and $\theta_1$ according to the sign of $(\hat{\mathbf{k}}_1\times\hat{\mathbf{k}}_2) \cdot \mathbf{x}$. For $(\hat{\mathbf{k}}_1\times\hat{\mathbf{k}}_2) \cdot \mathbf{x}>0$, we set
\[
\hat{u}(\theta_2,\theta_3,\phi)=
\begin{cases}
\cos{\theta_2}, & \phi=0,\; -\frac{\pi}{2} \leq \theta_2<0,\\
-\sec^2{\phi}, & 0<\phi\leq \frac{\pi}{2}, \; \theta_2=0,\\
0, & \text{otherwise,}
\end{cases}
\qquad \text{and} \qquad \theta_1(\theta_2,\theta_3,\phi)=\frac{\pi}{2}.
\]
For $(\hat{\mathbf{k}}_1\times\hat{\mathbf{k}}_2) \cdot \mathbf{x}<0$, we set
\[
\hat{u}(\theta_2,\theta_3,\phi)=
\begin{cases}
\cos{\theta_2}, & \phi=0,\;  0 <\theta_2 \leq\frac{\pi}{2},\\
\sec^2{\phi}, & 0<\phi\leq \frac{\pi}{2}, \; \theta_2=0,\\
0, & \text{otherwise,}
\end{cases}
\qquad \text{and} \qquad \theta_1(\theta_2,\theta_3,\phi)=-\frac{\pi}{2}.
\]
The use of complex wave vectors in Helmholtz representations has, in fact, been well established for many decades, with extensive early developments appearing in classical scattering and radiation theory, \cite{cappellin_properties_2008,Booker_1950}.

Interestingly, although spherical waves are fundamental in scalar wave theory, they cannot exist in electromagnetism or in general transverse, time-harmonic vector fields without breaking spherical symmetry. This limitation arises from the Hairy Ball Theorem, which prohibits the existence of a continuous, non-vanishing tangent vector field on a sphere. Nevertheless, the fundamental solution of the scalar Helmholtz equation appears in many electromagnetic field expressions. The field produced by a typical antenna—modeled by a Hertzian dipole—or by the electrons in a material excited by an external plane-wave source \cite{friesecke_twisted_2016, jaykar_macroscopic_2026}, naturally exhibits such fundamental solutions. A typical expression for the field generated by a Hertzian dipole at the origin—or equivalently for the field of an electron at the origin excited by a plane wave—is given by:
\begin{align*}
    \mathbf{E} (\mathbf{x},t) &= \frac{k}{|\mathbf{x}|} \left( \mathbf{I}-\frac{\mathbf{x}}{|\mathbf{x}|}\otimes\frac{\mathbf{x}}{|\mathbf{x}|} \right)\mathbf{F} e^{-i\omega (t-|\mathbf{x}|/c)}\ .
\end{align*}
Here, $k$ is a constant, $\mathbf{F}$ is the direction along which the dipole or electron oscillates. In the case of antennas, this represents the orientation of an omnidirectional antenna. Note that such fields can be written as $\mathbf{E} (\mathbf{x},t) = 4\pi k e^{-i\omega t} \left( \mathbf{I}-\frac{\mathbf{x}}{|\mathbf{x}|}\otimes\frac{\mathbf{x}}{|\mathbf{x}|} \right)\mathbf{F}  \Phi(|\mathbf{x}|)$, where $\Phi(r) = \frac{1}{4\pi} \frac{e^{i(\omega/c) r}}{r}$. 

\subsection{Elastic waves}
We have given examples of physical systems governed by both the vector Helmholtz equation (electromagnetic waves) and the scalar Helmholtz equation (acoustic waves). In this section, we consider elastic waves in isotropic solids, which require \emph{both} scalar and vector Helmholtz equations to model their two fundamental modes of propagation \cite{Meyers_1994}.

Longitudinal (or dilatational) waves are governed by the wave equation for the divergence of the displacement field, which reduces to a scalar Helmholtz equations under the time-harmonic assumption. Shear (or rotational) waves, on the other hand, are governed by the wave equation for the curl of the displacement field, which similarly reduces to a vector Helmholtz equation in the time-harmonic case.

Consider a linear isotropic elastic material with constitutive relation:

\begin{align}
    \boldsymbol{\sigma} = \lambda \mathrm{tr} \left(\frac{\nabla\mathbf{u} + \nabla\mathbf{u}^T}{2} \right) \mathbf{I} + 2\mu \left(\frac{\nabla\mathbf{u} + \nabla\mathbf{u}^T}{2} \right),
\end{align}
where $\boldsymbol{\sigma}$ is the Cauchy stress tensor, $\mathbf{u}$ is the displacement field, and $\lambda, \mu$ are the Lamé parameters. The balance of linear momentum $\nabla\cdot\boldsymbol{\sigma} = \rho \Ddot{\mathbf{u}}$, where $\rho$ is the mass density, leads to:

\begin{align}\label{eq:LinearMomentumBalance}
    (\lambda + \mu) \nabla \left( \nabla\cdot\mathbf{u} \right) + \mu \nabla\cdot\nabla\mathbf{u} = \rho \Ddot{\mathbf{u}}
\end{align} 

Taking the curl of \Cref{eq:LinearMomentumBalance} yields the wave equation for the rotation field $\nabla\times\mathbf{u}$: 

\begin{align}
    \mu \nabla\cdot\nabla \left(\nabla\times\mathbf{u}\right) = \rho \Ddot{\left( \nabla\times\mathbf{u}\right)}
\end{align}
This equation governs shear waves, which propagate at speed $v_s = \sqrt{\frac{\mu}{\rho}}$. While the waves produced in air by a tuning fork are longitudinal, the elastic waves within the tuning fork are a classic example of the shear waves.

Assuming a time-harmonic solution with angular frequency $\omega_s$, the rotation field can be represented using the framework developed above:

\begin{align}
    \nabla\times\mathbf{u} (\mathbf{x},t) = e^{-i\omega_s t} \int_{-\pi}^{\pi} \int_{-\pi/2}^{\pi/2} a_s(\theta_2,\theta_3) \mathbf{R}_3(\theta_3) \mathbf{R}_2(\theta_2) \mathbf{R}_1(\theta_1(\theta_2,\theta_3)) \boldsymbol{\omega}_s e^{i \mathbf{R}_3(\theta_3)\mathbf{R}_2(\theta_2)\mathbf{k}_s\cdot\mathbf{x}} \mathrm{d}\theta_2 \mathrm{d}\theta_3,
\end{align}
where:
\begin{itemize}
    \item $\omega_s$ is the angular frequency,
    \item $a_s (\theta_2,\theta_3)$ is the amplitude of a shear wave traveling in direction $\mathbf{R}_3(\theta_3)\mathbf{R}_2(\theta_2)\mathbf{k}_s$,
    \item $\mathbf{e}_1$,$\mathbf{e}_2$,$\mathbf{e}_3$ are the orthonormal axes associated with the rotations $\mathbf{R}_1$,$\mathbf{R}_2$,$\mathbf{R}_3$,
    \item $\mathbf{k}_s$ is a wave vector with $|\mathbf{k}_s| = \omega_s/v_s$ and aligned with the axis of $\mathbf{R}_1$,
    \item $\boldsymbol{\omega}_s$ is a rotation vector aligned with the axis of $\mathbf{R}_3$,
    \item $\theta_1$ specifies the orientation of $\mathbf{\omega}_s$ in the plane perpendicular to the propagation direction.
\end{itemize}  

The rotation vector $\boldsymbol{\omega}_s$ must be orthogonal to the wave vector $\mathbf{k_s}$, consistent with divergence-free nature of the curl field. Choosing $\mathbf{k}_s \perp\mathbf{\omega}_s$ aligns naturally with the integral representation developed earlier.

Taking the divergence of \Cref{eq:LinearMomentumBalance} yields the wave equation for the dilatation field:

\begin{align}
    (\lambda + 2\mu) \nabla\cdot\nabla \left(\nabla\cdot\mathbf{u}\right) = \rho \Ddot{\left( \nabla\cdot\mathbf{u}\right)}
\end{align}

This equation governs longitudinal waves, which propagate at speed $v_l = \sqrt{\frac{\lambda + 2\mu}{\rho}}$. 

Under a time-harmonic assumption with frequency $\omega_l$, the dilatation can be represented as:

\begin{align}
    \nabla\cdot\mathbf{u} (\mathbf{x},t) = e^{-i\omega_l t} \int_{-\pi}^{\pi} \int_{-\pi/2}^{\pi/2} a_l(\theta_2,\theta_3)  e^{i \mathbf{R}_3(\theta_3)\mathbf{R}_2(\theta_2)\mathbf{k}_l\cdot\mathbf{x}} \mathrm{d}\theta_2 \mathrm{d}\theta_3,
\end{align}
where:
\begin{itemize}
    \item $\omega_l$ is the angular frequency,
    \item $a_l (\theta_2,\theta_3)$ is the amplitude of a longitudinal wave traveling in the direction $\mathbf{R}_3(\theta_3)\mathbf{R}_2(\theta_2)\mathbf{k}_l$,
    \item $\mathbf{k}_l$ is a wave vector with $|\mathbf{k}_l| = \omega_l/v_l$ and perpendicular to the (orthogonal) rotation axes of $\mathbf{R}_1$ and $\mathbf{R}_2$.
\end{itemize}

In general, a vector field is uniquely defined by its divergence-free and irrotational components, as stated by the Helmholtz decomposition theorem. In the context of wave mechanics, given time frequencies $\omega_s$ and $\omega_l$, the displacement field $\mathbf{u}$ is entirely described by its curl (divergence-free part) and divergence (irrotational part). Both components admit the integral representations above, showing that the framework developed in earlier sections extends naturally to elastic waves in isotropic solids.

\noindent{\bf Acknowledgment.}  This work
was supported by AFOSR (FA9550-23-1-0093) and the AFOSR-MURI program (FA9550-25-1-0262).

\bibliography{GeneralSolution}
\bibliographystyle{ieeetr}

\appendix
\section{Generalization to evanescent waves or source-free dissipative media}\label{sec:ComplexWaveVector}

In a dissipative medium, electromagnetic waves lose energy as they propagate. This physical behavior can be modeled mathematically by allowing the wave vectors in Maxwell's equation to be complex \cite{Fry_PlaneWO_1927,Burrows_1965}. Such complex wave vectors can also be used to model evanescent waves, which are not really propagating waves, but are solutions to Maxwell's equations. The general class of time-harmonic solutions in a free space was discussed in \Cref{sec:GeneralizationOfTwistedXrays} and is given by \Cref{eq:FinalSolution}. 

\begin{align*}
    \mathbf{E}(\mathbf{x},t) &= e^{-i \omega t} \int_{-\pi}^\pi \int_{-\pi/2}^{\pi/2} A(\theta_2,\theta_3) \mathbf{R}_3 (\theta_3) \mathbf{R}_2 (\theta_2) \mathbf{R}_1 \left(\theta_1(\theta_2,\theta_3)\right) \mathbf{n}_0 e^{i (\mathbf{R}_3 (\theta_3) \mathbf{R}_2 (\theta_2) \mathbf{R}_1\left(\theta_1(\theta_2,\theta_3)\right) \mathbf{k}_0 \cdot \mathbf{x})} \mathrm{d}\theta_2 \mathrm{d}\theta_3
\end{align*}

Here, we have included the rotation $\mathbf{R}_1\left(\theta_1\right)$ inside the exponential for convenience in satisfying the divergence-free condition. In the original formulation, $\mathbf{k}_0 \in \mathbb{R}^3$, but to model dissipation or evanescent waves, we now allow $\mathbf{k}_0 \in \mathbb{C}^3$. Although the original solution applies to free space, it remains valid in a source-free homogeneous medium if we define the wave speed as $c_1 = 1/\sqrt{\mu \epsilon}$, where $\mu$ and $\epsilon$ are the permeability and permittivity of the medium, respectively.
Let $\mathbf{k}_0 = \mathbf{k}_1 + i \mathbf{k}_2$, with $\mathbf{k}_1,\mathbf{k}_2 \in \mathbb{R}^3$. Notice, $\mathbf{k}_0$ uniquely defines $\mathbf{k}_1$ and $\mathbf{k}_2$. Substituting into the Helmholtz equation and divergence-free condition yields:

\begin{align*}
    \mathbf{k}_0\cdot\mathbf{k}_0 &= \mathbf{k}_1\cdot \mathbf{k}_1 - \mathbf{k}_2 \cdot \mathbf{k}_2 + 2i\mathbf{k}_1\cdot\mathbf{k}_2 = k^2,\\
    \mathbf{k}_0 \cdot \mathbf{n}_0 &= \mathbf{k}_1 \cdot \mathbf{n}_0 + i \mathbf{k}_2 \cdot \mathbf{n}_0 = 0.
\end{align*}

Assuming $\mathbf{n}_0 \in \mathbb{R}^3$, the divergence-free condition implies that $\mathbf{n}_0$ is orthogonal to both $\mathbf{k}_1$ and $\mathbf{k}_2$. From the imaginary part of the Helmholtz equation, we also require $\mathbf{k}_1 \cdot \mathbf{k}_2 =0$, so $\mathbf{n}_0$, $\mathbf{k}_1$ and $\mathbf{k}_2$ are mutually orthogonal. Let $\mathbf{k}_1 = k_1 \hat{\mathbf{k}}_1$ and $\mathbf{k}_2=k_2 \hat{\mathbf{k}}_2$ with $\hat{\mathbf{k}}_1$ and $\hat{\mathbf{k}}_2$ unit vectors. Then $\{\hat{\mathbf{k}}_1, \hat{\mathbf{k}}_2, \mathbf{n}_0 \}$ forms an orthonormal basis. From the real part of the Helmholtz equation, we obtain: $$k_1^2-k_2^2 = k^2.$$ We can parameterize this using an angle $\phi \in [0,\pi/2)$ as: $$k_1 = k \sec{\phi}, \quad k_2 = k \tan{\phi}.$$ Thus, the time harmonic solution becomes:  $$\mathbf{E} (\mathbf{x},t) = \mathbf{n}_0 e^{i k (\sec{\phi} \hat{\mathbf{k}}_1 + i \tan{\phi} \hat{\mathbf{k}}_2)\cdot\mathbf{x}} e^{-i \omega t},$$ which satisfies Maxwell's equation in a source-free dissipative medium with wave speed $c_1 = \omega/k$. This leads to the following generalization of \Cref{eq:FinalSolution}: 

\begin{subequations}\label{eq:FinalSolutionDissipativeMedia}
\begin{align}
    \mathbf{E}(\mathbf{x},t) &= e^{-i \omega t} \int_{0}^{\pi/2}\int_{-\pi}^\pi \int_{-\pi/2}^{\pi/2} A(\theta_2,\theta_3,\phi) \mathbf{R}(\theta_2,\theta_3,\phi) \mathbf{n}_0 e^{i k \mathbf{R}(\theta_2,\theta_3,\phi) \hat{\mathbf{k}} (\phi)\cdot \mathbf{x}} \mathrm{d}\theta_2 \mathrm{d}\theta_3 \mathrm{d}\phi\\
    \mathbf{R}(\theta_2,\theta_3,\phi) &= \mathbf{R}_3 (\theta_3) \mathbf{R}_2 (\theta_2) \mathbf{R}_1 \left(\theta_1(\theta_2,\theta_3,\phi)\right)\\
    \hat{\mathbf{k}} (\phi) &= \sec{\phi} \hat{\mathbf{k}}_1 + i \tan{\phi} \hat{\mathbf{k}}_2
\end{align}
\end{subequations}

Similarly, if we want to solve the Helmholtz equation for a scalar-valued function $u$, $\Delta u=-k^2u$, then we can have a scalar version of the above solution as

\begin{subequations}\label{eq:HelmholtzScalarSolutionDissipative}
\begin{align}
    u(\mathbf{x},t) &= e^{-i \omega t} \int_{0}^{\pi/2}\int_{-\pi}^\pi \int_{-\pi/2}^{\pi/2} \hat{u}(\theta_2,\theta_3,\phi)  e^{i k \mathbf{R}(\theta_2,\theta_3,\phi) \hat{\mathbf{k}} (\phi)\cdot \mathbf{x}} \mathrm{d}\theta_2 \mathrm{d}\theta_3 \mathrm{d}\phi\\
    \mathbf{R}(\theta_2,\theta_3,\phi) &= \mathbf{R}_3 (\theta_3) \mathbf{R}_2 (\theta_2) \mathbf{R}_1 \left(\theta_1(\theta_2,\theta_3,\phi)\right)\\
    \hat{\mathbf{k}} (\phi) &= \sec{\phi} \hat{\mathbf{k}}_1 + i \tan{\phi} \hat{\mathbf{k}}_2
\end{align}
\end{subequations}

\begin{remark}
    In the above, we assumed $\mathbf{n}_0 \in \mathbb{R}^3$. A more general formulation allows $\mathbf{n}_0 \in \mathbb{C}^3$, with $\mathbf{n}_0 = \mathbf{n}_1 + i \mathbf{n}_2$, where $\mathbf{n}_1,\mathbf{n}_2 \in \mathbb{R}^3$. The Helmholtz equation imposes no additional constraints on $\mathbf{n}_0$, it only requires $\mathbf{k}_1 \cdot \mathbf{k}_2 =0$ perpendicular. But the divergence-free condition becomes:

    \begin{align*}
        \left(\mathbf{n}_1 + i \mathbf{n}_2 \right) \cdot\left(\mathbf{k}_1 + i \mathbf{k}_2 \right) &= 0,
    \end{align*}
which yields:
    \begin{align*}
       \mathbf{n}_1 \cdot \mathbf{k}_1 = \mathbf{n}_2 \cdot \mathbf{k}_2, \quad & \quad \mathbf{n}_1 \cdot \mathbf{k}_2 = - \mathbf{n}_2 \cdot \mathbf{k}_1.
    \end{align*}

These conditions constrain only the components of $\mathbf{n}_1$ and $\mathbf{n}_2$ in the plane spanned by $\mathbf{k}_1$ and $\mathbf{k}_2$. We can set the perpendicular components to zero, because they do not affect the divergence-free condition and their influence can be absorbed into the amplitude function $A(\theta_2,\theta_3,\phi)$ in Equation \cref{eq:FinalSolutionDissipativeMedia}. Now, using the earlier parameterization: $$\mathbf{k}_1 = k \sec{\phi} \hat{\mathbf{k}}_1, \quad \mathbf{k}_2 = k \tan{\phi} \hat{\mathbf{k}}_2,$$ we obtain:

\begin{align*}
    \sec{\phi}(\mathbf{n}_1 \cdot \hat{\mathbf{k}}_1) = \tan{\phi}(\mathbf{n}_2 \cdot \hat{\mathbf{k}}_2), \quad & \quad \tan{\phi}(\mathbf{n}_1 \cdot \hat{\mathbf{k}}_2) = - \sec{\phi} (\mathbf{n}_2 \cdot \hat{\mathbf{k}}_1).
\end{align*}
Solving these, we find: $$\mathbf{n}_1 = \eta_1 \tan{\phi} \hat{\mathbf{k}}_1 + \eta_2 \sec{\phi}\hat{\mathbf{k}}_2, \quad\ \mathbf{n}_2 = -\eta_2 \tan{\phi}\hat{\mathbf{k}}_1 + \eta_1 \sec{\phi} \hat{\mathbf{k}}_2 $$ for arbitrary $\eta_1,\eta_2\in\mathbb{C}$. Therefore, $$\mathbf{n}_0 = \mathbf{n}_1 + i \mathbf{n}_2 = (\eta_2 + i \eta_1) \left(\sec{\phi} \hat{\mathbf{k}}_2 - i \tan{\phi} \hat{\mathbf{k}}_1 \right).$$ Without loss of generality, we can normalize $\mathbf{n}_0$ such that $\mathbf{n}_0\cdot\mathbf{n}_0 =1$ using $\mathbf{n}_0 = \sec{\phi} \hat{\mathbf{k}}_2 - i \tan{\phi} \hat{\mathbf{k}}_1 = \mathbf{R}_3(\pi/2) \left( \sec{\phi}\hat{\mathbf{k}}_1 + i \tan{\phi} \hat{\mathbf{k}}_2 \right)$. This leads to the final generalized solution: 

\begin{subequations}
\begin{align}\label{eq:GeneralSolutionDissipativeMedia}
    &\mathbf{E}(\mathbf{x},t) = e^{-i \omega t} \int_{0}^{\pi/2}\int_{-\pi}^\pi \int_{-\pi/2}^{\pi/2} \mathbf{R}(\theta_2,\theta_3,\phi) \left[ A(\theta_2,\theta_3,\phi) \mathbf{n}_0 + \tilde{A}(\theta_2,\theta_3,\phi) \mathbf{R}_3(\pi/2) \hat{\mathbf{k}}(\phi) \right] e^{i k \mathbf{R}(\theta_2,\theta_3,\phi) \hat{\mathbf{k}} (\phi) \cdot \mathbf{x}} \mathrm{d}\theta_2 \mathrm{d}\theta_3 \mathrm{d}\phi,\\
    &\mathbf{R}(\theta_2,\theta_3,\phi) = \mathbf{R}_3 (\theta_3) \mathbf{R}_2 (\theta_2) \mathbf{R}_1 \left(\theta_1(\theta_2,\theta_3,\phi)\right),\\
    &\hat{\mathbf{k}} (\phi) = \sec{\phi} \hat{\mathbf{k}}_1 + i \tan{\phi} \hat{\mathbf{k}}_2.
\end{align}
\end{subequations}
\end{remark}

\section{Maximum magnitude by linear superposition of complex vectors}\label{sec:MaxMag}

Consider a set of $n$ two-dimensional complex vectors $\left(\mathbf{v}_j\right)_{j=1}^n$. The goal is to determine complex coefficients $\left(\alpha_j\right)_{j=1}^n$, with $\left| \alpha_j \right| \leq 1$, that maximize the magnitude of their linear combination: $$\left| \sum_j \alpha_j \mathbf{v}_j \right| = \sup_{\tilde{\alpha}_j : \left| \tilde{\alpha}_j \right| \leq 1} \left| \sum \tilde{\alpha}_j \mathbf{v}_j \right|.$$ 

Let $\mathbf{v}_{opt} = \sum_j \alpha_j \mathbf{v}_j$ denote the optimal resultant vector. We assume that all vectors $\mathbf{v}_j$ are non-zero, without loss of generality. Here, $|\mathbf{v}_j|$ is the complex norm, defined as $\sqrt{\langle \mathbf{v}_j, \mathbf{v}_j \rangle}$, and the inner product in complex vector space is given as $\langle \mathbf{v}_j, \mathbf{v}_k \rangle = \mathbf{v}_j \cdot \mathbf{v}_k^*$, where $(\cdot)^*$ denotes complex conjugation. Note that $\langle \mathbf{v}_j, \mathbf{v}_k \rangle = \langle \mathbf{v}_k, \mathbf{v}_j \rangle^*$. Although a closed-form solution is not known, we can establish the following key properties:

\begin{claim} $\Re\langle \mathbf{v}_{opt} , \mathbf{v}_j \rangle \neq 0$ for all $j=1,2,\dots,n$. (Remember,  $\mathbf{v}_{opt} = \sum_j \alpha_j \mathbf{v}_j$ is the optimum resultant vector).
\end{claim}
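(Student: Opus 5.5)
The plan is a first-order perturbation argument at the optimum, followed by a normalization of the global phase. The global phase is needed because, with complex coefficients, an optimal combination is only determined up to a factor $e^{i\varphi}$. Replacing every $\alpha_j$ by $e^{i\varphi}\alpha_j$ keeps $|e^{i\varphi}\alpha_j|\le 1$ and leaves $|\mathbf{v}_{opt}|$ unchanged, but it rotates every pairing $\langle \mathbf{v}_{opt},\mathbf{v}_j\rangle$ by the same phase. So the real-part condition cannot hold for every maximizer. For instance, with $n=1$, $\mathbf{v}_1=(1,0)$ and $\alpha_1=i$, the combination is optimal yet $\Re\langle \mathbf{v}_{opt},\mathbf{v}_1\rangle=0$. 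I will therefore read the claim as asserting that $\mathbf{v}_{opt}$ can be chosen among the maximizers so that the condition holds. In the real setting used for the truncated icosahedron (real vectors, real coefficients $\alpha_j\in[-1,1]$), this reduces to the literal statement, since there the real part is the whole pairing.

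\textbf{Step 1: $\langle \mathbf{v}_{opt},\mathbf{v}_j\rangle\neq 0$ for every maximizer and every $j$.} Fix $j$ and perturb only the $j$-th coefficient, $\alpha_j\mapsto \alpha_j+s$ with $s\in\mathbb{C}$. This gives
\begin{align*}
|\mathbf{v}_{opt}+s\mathbf{v}_j|^2 = |\mathbf{v}_{opt}|^2 + 2\Re\!\left(s\,\langle \mathbf{v}_j,\mathbf{v}_{opt}\rangle\right) + |s|^2|\mathbf{v}_j|^2 .
\end{align*}
Suppose $\langle \mathbf{v}_{opt},\mathbf{v}_j\rangle=0$. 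Then the linear term vanishes, and any feasible $s\neq 0$ strictly increases the magnitude because $\mathbf{v}_j\neq 0$. A feasible $s$ always exists. If $\alpha_j=0$, take any $s$ with $0<|s|\le 1$. If $\alpha_j\neq 0$, take $s=-\varepsilon\alpha_j$ with $0<\varepsilon<1$, which gives $|\alpha_j+s|=(1-\varepsilon)|\alpha_j|\le 1$. Either way this contradicts optimality, so the pairing is nonzero.

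\textbf{Step 2: choose the global phase.} A maximizer exists because the objective is continuous and the polydisc $\{|\alpha_j|\le 1\}$ is compact. Set $c_j=\langle \mathbf{v}_{opt},\mathbf{v}_j\rangle$. By Step 1, $c_j\neq 0$ for all $j$, and every $e^{i\varphi}\mathbf{v}_{opt}$ is again an optimal resultant. Its pairing with $\mathbf{v}_j$ is $e^{i\varphi}c_j$, whose real part vanishes only when $\varphi\equiv \tfrac{\pi}{2}-\arg c_j \pmod{\pi}$. That excludes at most $2n$ angles in $[0,2\pi)$, so any other $\varphi$ yields an optimal $\mathbf{v}_{opt}$ with $\Re\langle \mathbf{v}_{opt},\mathbf{v}_j\rangle\neq 0$ for all $j$.

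\textbf{Real case.} When the $\mathbf{v}_j$ and $\alpha_j$ are real, Step 1 with $s=-\varepsilon\alpha_j$ (or any real $s$ if $\alpha_j=0$) already gives $\langle \mathbf{v}_{opt},\mathbf{v}_j\rangle\neq 0$. That pairing is real, so it is its own real part, and no phase choice is needed.

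\textbf{Main obstacle.} The perturbation in Step 1 is routine. The delicate point is the boundary case $|\alpha_j|=1$, where the chosen perturbation must stay inside the unit disc. The inward step $s=-\varepsilon\alpha_j$ handles this because the linear term is zero under the contradiction hypothesis, so only the positive quadratic term remains.
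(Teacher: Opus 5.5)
Your proposal is correct, and your counterexample ($n=1$, $\mathbf{v}_1=(1,0)$, $\alpha_1=i$) is valid: for complex coefficients the claim as literally stated is false. The paper's own argument has a gap that your example exposes. It argues by contradiction from $\Re\langle\mathbf{v}_{opt},\mathbf{v}_k\rangle=0$. Writing $\alpha_k=\gamma_k e^{i\phi_k}$, it moves $\alpha_k$ along $e^{i\phi_k}$: outward to modulus one if $\gamma_k<1$, inward by $\delta$ if $\gamma_k=1$. It then asserts that the squared norm becomes $|\mathbf{v}_{opt}|^2+t^2|\mathbf{v}_k|^2$. But the cross term is $\pm 2t\,\Re\bigl(e^{-i\phi_k}\langle\mathbf{v}_{opt},\mathbf{v}_k\rangle\bigr)$, and the hypothesis does not control it unless $e^{i\phi_k}=\pm1$. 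In your example the inward step gives $(1-\delta)^2$, not $1+\delta^2$. What those perturbations actually prove is $\Re\langle\mathbf{v}_{opt},e^{i\phi_k}\mathbf{v}_k\rangle\neq0$. That statement concerns the phase-aligned pairing $\langle\mathbf{v}_{opt},\alpha_k\mathbf{v}_k\rangle$, is invariant under a global phase, and is what Claim 2 builds on.

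Your route differs in two ways. First, you take the full pairing $\langle\mathbf{v}_{opt},\mathbf{v}_j\rangle=0$ as the contradiction hypothesis. This kills the linear term for every perturbation direction, so any single feasible nonzero step finishes the argument; the only case distinction is $\alpha_j=0$ versus $\alpha_j\neq0$. Second, you use the global-phase freedom, which the paper never mentions, to recover the real-part condition for a suitably normalized maximizer. Your Step 1 is also exactly what the paper needs later, when it writes $\langle\mathbf{v}_{opt},\mathbf{v}_k\rangle=\gamma_k e^{i\beta_k}$ with $\gamma_k>0$ to compute the coefficients.

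One minor point concerns your real-case paragraph. In the paper's icosahedral application the amplitudes are complex with $|A|\le 1$; only the vectors are real. So the literal statement there still needs a phase normalization, which the paper makes implicitly by reporting coefficients $\pm1$. Your remark holds once the coefficients are restricted to be real, and your Step 2 covers the complex case anyway.
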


\begin{proof}
    Assume, for contradiction, that $\Re\langle \mathbf{v}_{opt} , \mathbf{v}_k \rangle = 0$ for some $k$, and let $\alpha_k =\gamma_k e^{i\phi_k}$ be the corresponding coefficient with $0\leq\gamma_k\leq1$. Consider two cases:
    If $\gamma_k<1$, then increasing its magnitude to 1 by adding $(1-\gamma_k)e^{i \phi_k}\mathbf{v}_k $ yields: $$ |\mathbf{v}_{opt} + (1-\gamma_k)e^{i \phi_k}\mathbf{v}_k|^2 =  |\mathbf{v}_{opt}|^2 + (1-\gamma_k)^2 |\mathbf{v}_k|^2, $$ which contradicts optimality. 
    If $\gamma_k=1$, then subtracting a small multiple $\delta e^{i\phi_k} \mathbf{v}_k$ with $0<\delta<1$ yields: $$ |\mathbf{v}_{opt} - \delta e^{i \phi_k}\mathbf{v}_k|^2 =  |\mathbf{v}_{opt}|^2 + \delta^2 |\mathbf{v}_k|^2, $$ again contradicting optimality. In both cases, the assumption leads to a contradiction. Hence, the claim holds.
\end{proof}

\begin{claim} $\Re\langle \mathbf{v}_{opt} , \alpha_j \mathbf{v}_j \rangle > 0$ for all $j=1,2,\dots,n$.
\end{claim}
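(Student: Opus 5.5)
The plan is a first-order perturbation argument on the phase and modulus of each coefficient $\alpha_j$, using the previous claim to exclude the degenerate case. Fix $j$ and write $c_j = \langle \mathbf{v}_{opt}, \alpha_j \mathbf{v}_j\rangle$.

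\emph{Case $\alpha_j = 0$.} Then $\mathbf{v}_{opt}$ contains no contribution from $\mathbf{v}_j$. I would perturb $\alpha_j$ from $0$ to $\epsilon e^{i\phi}$ with $0<\epsilon\le 1$. Expanding gives
\[
|\mathbf{v}_{opt}+\epsilon e^{i\phi}\mathbf{v}_j|^2 = |\mathbf{v}_{opt}|^2 + 2\epsilon\,\Re\bigl(e^{-i\phi}\langle \mathbf{v}_{opt},\mathbf{v}_j\rangle\bigr) + \epsilon^2|\mathbf{v}_j|^2 .
\]
Choose $\phi$ so that $e^{-i\phi}\langle \mathbf{v}_{opt},\mathbf{v}_j\rangle = |\langle \mathbf{v}_{opt},\mathbf{v}_j\rangle|\ge 0$. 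The right-hand side then exceeds $|\mathbf{v}_{opt}|^2$ for every $\epsilon>0$, because $\mathbf{v}_j\neq 0$. This contradicts optimality, so $\alpha_j\neq 0$.

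\emph{Case $\alpha_j\neq0$.} Write $\alpha_j=\gamma_j e^{i\phi_j}$ with $0<\gamma_j\le1$.
\begin{itemize}
\item \textbf{Phase variations.} Replace $\alpha_j$ by $\alpha_j e^{i t}$. The first-order change of $|\mathbf{v}_{opt}|^2$ is $2t\,\Re\bigl(-i\,c_j\bigr)\cdot(\pm1)$, which is proportional to $\Im c_j$. Optimality under these admissible variations forces $\Im c_j=0$, so $c_j$ is real.
\item \textbf{Radial variations.} Replace $\alpha_j$ by $(1-\delta)\alpha_j$ with small $\delta>0$, which is always admissible. This gives
\[
|\mathbf{v}_{opt}-\delta\alpha_j\mathbf{v}_j|^2=|\mathbf{v}_{opt}|^2-2\delta\,\Re c_j+\delta^2\gamma_j^2|\mathbf{v}_j|^2 .
\]
If $\Re c_j<0$, the right-hand side exceeds $|\mathbf{v}_{opt}|^2$, contradicting optimality. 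Hence $\Re c_j\ge 0$.
\end{itemize}

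The main obstacle is excluding $\Re c_j=0$. The previous claim gives $\Re\langle\mathbf{v}_{opt},\mathbf{v}_j\rangle\neq0$, but that does not directly control $\Re c_j = \Re\bigl(\alpha_j^*\langle\mathbf{v}_{opt},\mathbf{v}_j\rangle\bigr)$. Instead I would reuse its mechanism. Suppose $c_j=0$; since $c_j$ is real from the phase step, this means $\langle \mathbf{v}_{opt},\mathbf{v}_j\rangle=0$. Then for any admissible change $\beta$ of $\alpha_j$,
\[
|\mathbf{v}_{opt}+\beta\mathbf{v}_j|^2=|\mathbf{v}_{opt}|^2+|\beta|^2|\mathbf{v}_j|^2 ,
\]
with no cross term. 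Any nonzero admissible $\beta$ strictly increases the norm. Such a $\beta$ always exists: take $\beta=-\delta\alpha_j$ if $\gamma_j=1$, and $\beta=(1-\gamma_j)e^{i\phi_j}$ if $\gamma_j<1$. This contradicts optimality, exactly as in the proof of the previous claim.

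Combining the cases gives $\Re c_j>0$ for every $j$. I would also note that the argument tacitly uses $\mathbf{v}_{opt}\neq0$, which holds because some $\mathbf{v}_j\neq0$.
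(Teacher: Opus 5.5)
Your proof is correct, and it is more complete than the paper's, though it takes a different route. The paper makes one global move. It flips $\alpha_k\mapsto-\alpha_k$, passing to the admissible vector $\mathbf{v}_{opt}-2\alpha_k\mathbf{v}_k$, for which
\[
|\mathbf{v}_{opt}-2\alpha_k\mathbf{v}_k|^2=|\mathbf{v}_{opt}|^2-4\Re\langle\mathbf{v}_{opt},\alpha_k\mathbf{v}_k\rangle+4|\alpha_k|^2|\mathbf{v}_k|^2 .
\]
However, it only assumes $\Re\langle\mathbf{v}_{opt},\alpha_k\mathbf{v}_k\rangle<0$, so as written it proves only $\geq 0$. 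The equality case is never addressed, and in particular neither is $\alpha_k=0$. There the flip changes nothing, and $|\alpha_k|=1$ cannot be borrowed from Claim 3, whose proof itself relies on the present claim. Your phase-aligned insertion for $\alpha_j=0$ and your treatment of $c_j=0$ close exactly that gap.

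Conversely, the flip identity shows your middle steps can be compressed. Once $\alpha_j\neq0$ is known, the last term above is strictly positive, so $\Re c_j\le 0$ is contradicted in one line. The phase variation, the radial variation and the orthogonality step then become unnecessary for this claim. What your phase step buys in exchange is the stronger conclusion that $c_j$ is real and positive, i.e.\ $\alpha_j$ carries the phase of $\langle\mathbf{v}_{opt},\mathbf{v}_j\rangle$. That is precisely the content of Claim 4 and of the coefficient formula that follows it.

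One framing point needs fixing. You open by saying you use the previous claim, and later say your last step works ``exactly as in the proof of the previous claim''. Your step is valid because there the whole inner product $\langle\mathbf{v}_{opt},\mathbf{v}_j\rangle$ vanishes, so the cross term disappears for every $\beta$. The paper's Claim 1 argument has only $\Re\langle\mathbf{v}_{opt},\mathbf{v}_k\rangle=0$, and it drops the cross term $2(1-\gamma_k)\Re\bigl(e^{-i\phi_k}\langle\mathbf{v}_{opt},\mathbf{v}_k\rangle\bigr)$, which need not vanish. Indeed Claim 1 can fail: take $n=1$, $\mathbf{v}_1=(1,0)$ and the optimal choice $\alpha_1=i$, which gives $\langle\mathbf{v}_{opt},\mathbf{v}_1\rangle=i$. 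It is therefore a strength that your argument is independent of that claim, and you should remove the wording suggesting otherwise.
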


\begin{proof}
Assume the contrary: that for some $k$, $\Re \langle \mathbf{v}_{opt},\alpha_k \mathbf{v}_k\rangle < 0$. Then consider the vector $\mathbf{v}_{opt}' = \mathbf{v}_{opt}-2\alpha_k \mathbf{v}_k = \sum_j \tilde{\alpha}_j \mathbf{v}_j$, which still satisfies the constraint $|\tilde{\alpha}_j|\leq1$. A direct computation shows: $$|\mathbf{v}_{opt}'|^2 > |\mathbf{v}_{opt}|^2,$$ contradicting the optimality of $\mathbf{v}_{opt}$. Hence the claim holds.
\end{proof}

\begin{claim}
$|\alpha_j|=1$  for all $j=1,2,\dots,n$.
\end{claim}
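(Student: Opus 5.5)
We argue by contradiction and use a first-order perturbation of a single coefficient. The only ingredient is the preceding claim, $\Re\langle \mathbf{v}_{opt}, \mathbf{v}_j\rangle \neq 0$ for every $j$; the earlier positivity claim $\Re\langle \mathbf{v}_{opt},\alpha_j\mathbf{v}_j\rangle>0$ already rules out $\alpha_j=0$. Suppose some $k$ has $|\alpha_k|=\gamma_k<1$. Then $\alpha_k$ lies in the interior of the closed unit disc, so any small complex perturbation $\alpha_k\mapsto\alpha_k+\varepsilon$ keeps the constraint $|\alpha_k+\varepsilon|\le 1$. The other coefficients stay fixed, so the perturbed family is still admissible.

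Next expand the squared norm of the perturbed resultant:
\begin{align*}
|\mathbf{v}_{opt}+\varepsilon\mathbf{v}_k|^2 = |\mathbf{v}_{opt}|^2 + 2\Re\left(\varepsilon^*\langle \mathbf{v}_{opt},\mathbf{v}_k\rangle\right) + |\varepsilon|^2|\mathbf{v}_k|^2 .
\end{align*}
Write $z=\langle \mathbf{v}_{opt},\mathbf{v}_k\rangle$. By the first claim $\Re z\neq 0$, so in particular $z\neq 0$. Choose $\varepsilon = t\, z/|z|$ with $t>0$ small enough that $\gamma_k+t\le 1$. Then the linear term equals $2t|z|>0$, and the quadratic term is nonnegative. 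Hence $|\mathbf{v}_{opt}+\varepsilon\mathbf{v}_k|>|\mathbf{v}_{opt}|$, which contradicts optimality. Therefore $|\alpha_k|=1$, and since $k$ was arbitrary this holds for all $j$.

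The only step needing care is ensuring the linear term can be made positive, i.e.\ that $z\neq0$. This is exactly where the first claim is needed. It would fail if $\mathbf{v}_k$ were orthogonal to $\mathbf{v}_{opt}$, but then the first claim's own argument already gives a strict improvement, via the quadratic term. Once $z\neq0$ is in hand, the rest is the routine expansion above. The $t$-bound is harmless because $\gamma_k<1$ leaves room.

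An even simpler route is to take $\varepsilon=(1-\gamma_k)e^{i\phi_k}$ and use Claim 2. Since $\Re\langle \mathbf{v}_{opt},e^{i\phi_k}\mathbf{v}_k\rangle>0$, this pushes $\alpha_k$ radially to the boundary, and the linear term is strictly positive. This gives the same contradiction, so either choice of $\varepsilon$ completes the argument.
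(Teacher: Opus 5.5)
Your proof is correct. The route you call ``even simpler'' is exactly the paper's argument. The paper pushes $\alpha_k$ radially to the unit circle at fixed phase and asserts that this increases $\Re\langle\mathbf{v}_{opt},\cdot\rangle$, and hence the norm. That is your expansion with $\varepsilon=(1-\gamma_k)e^{i\phi_k}$, where the cross term $2(1-\gamma_k)\Re\langle\mathbf{v}_{opt},e^{i\phi_k}\mathbf{v}_k\rangle>0$ is supplied by Claim 2.

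Your primary route instead perturbs along $z/|z|$ with $z=\langle\mathbf{v}_{opt},\mathbf{v}_k\rangle$. This needs no sign information, only $z\neq 0$. Add your observation that for $z=0$ any admissible $\varepsilon\neq 0$ already wins through $|\varepsilon|^2|\mathbf{v}_k|^2>0$. Then the argument reduces to one fact: $\alpha\mapsto|\mathbf{v}_{opt}+(\alpha-\alpha_k)\mathbf{v}_k|^2$ is strictly convex on $\mathbb{C}$ (as $\mathbf{v}_k\neq 0$). So it cannot be maximized over the closed unit disc at an interior point. What this buys over the paper's version is independence from Claims 1 and 2.

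For that reason I would drop the appeal to Claim 1 altogether. As literally written, $\Re\langle\mathbf{v}_{opt},\mathbf{v}_j\rangle\neq 0$ cannot hold for every optimizer. The optimal set is invariant under $\mathbf{v}_{opt}\mapsto e^{i\theta}\mathbf{v}_{opt}$; for $n=1$, $\mathbf{v}_{opt}=i\mathbf{v}_1$ is optimal and $\Re\langle\mathbf{v}_{opt},\mathbf{v}_1\rangle=0$. Its proof actually establishes the version with the phase $e^{i\phi_j}$ inserted into the second slot. If you want $z\neq 0$ from an earlier claim, take it from Claim 2, since $\Re(\alpha_k^*z)>0$ forces $z\neq 0$. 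Your case split on $z$ already makes even that unnecessary.
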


\begin{proof}
Suppose $|\alpha_k|=r_k < 1$ for some $k$. Then increasing its magnitude to 1 (while preserving its phase) increases the real part of the inner product with $\mathbf{v}_{opt}$, and thus the norm of the sum. This contradicts optimality, so all coefficients must lie on the unit circle: $\alpha_j=e^{i\psi_j}$.
\end{proof}

\begin{claim}
$$\Re\langle\mathbf{v}_{opt} , e^{i \psi_j} \mathbf{v}_j\rangle = \sup_{\tilde{\psi}_j} \Re\langle\mathbf{v}_{opt} , e^{i \tilde{\psi}_j} \mathbf{v}_j\rangle \quad \text{ for all } j.$$
\end{claim}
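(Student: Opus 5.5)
The plan is a one-coordinate perturbation argument on the phases, reusing the setting of the previous claims. Fix $j$. By the preceding claim we may write $\alpha_j=e^{i\psi_j}$, and the optimum $\mathbf{v}_{opt}$ maximizes $|\sum_k\alpha_k\mathbf{v}_k|^2$. Keep all other coefficients fixed and let only the phase of the $j$-th coefficient vary. Set
\[
\mathbf{w}=\mathbf{v}_{opt}-e^{i\psi_j}\mathbf{v}_j=\sum_{k\neq j}\alpha_k\mathbf{v}_k ,
\]
which does not depend on $\tilde\psi_j$. For an arbitrary phase $\tilde\psi_j$ the resultant is $\mathbf{w}+e^{i\tilde\psi_j}\mathbf{v}_j$, and expanding gives
\[
|\mathbf{w}+e^{i\tilde\psi_j}\mathbf{v}_j|^2=|\mathbf{w}|^2+|\mathbf{v}_j|^2+2\Re\langle \mathbf{w},e^{i\tilde\psi_j}\mathbf{v}_j\rangle .
\]
The first two terms do not depend on $\tilde\psi_j$. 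Since the family $\{\tilde\psi_j\}$ with the other coefficients fixed is admissible, optimality of $\mathbf{v}_{opt}$ forces $\psi_j$ to maximize $\Re\langle\mathbf{w},e^{i\tilde\psi_j}\mathbf{v}_j\rangle$ over $\tilde\psi_j$.

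The claim, however, is stated with $\mathbf{v}_{opt}$ in the first slot rather than $\mathbf{w}$. I would bridge this as follows. Write $\langle\mathbf{w},\mathbf{v}_j\rangle=\rho e^{i\beta}$. Then
\[
\Re\langle\mathbf{w},e^{i\tilde\psi_j}\mathbf{v}_j\rangle=\rho\cos(\beta-\tilde\psi_j),
\]
so its maximizer is $\tilde\psi_j=\beta$ (any phase if $\rho=0$). Next,
\[
\Re\langle\mathbf{v}_{opt},e^{i\tilde\psi_j}\mathbf{v}_j\rangle=\Re\langle\mathbf{w},e^{i\tilde\psi_j}\mathbf{v}_j\rangle+|\mathbf{v}_j|^2\cos(\psi_j-\tilde\psi_j),
\]
because $\langle e^{i\psi_j}\mathbf{v}_j,e^{i\tilde\psi_j}\mathbf{v}_j\rangle=e^{i(\psi_j-\tilde\psi_j)}|\mathbf{v}_j|^2$. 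With $\psi_j=\beta$, both terms are individually maximized at $\tilde\psi_j=\psi_j$. Hence the sum is maximized there as well, which is exactly the statement.

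The degenerate case $\rho=0$ also needs no special handling: the second term alone is then maximized at $\tilde\psi_j=\psi_j$. So the claim holds for every $j$.

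The main obstacle is noticing that a naive first-order condition, setting $\frac{d}{d\tilde\psi_j}$ of $|\cdot|^2$ to zero, only yields stationarity. That stationarity is expressed with respect to $\mathbf{w}$ and not with respect to $\mathbf{v}_{opt}$. The decomposition above into two cosines peaking at the same phase is what turns stationarity into a genuine supremum statement for $\mathbf{v}_{opt}$. I would also make explicit that the supremum is over the single phase $\tilde\psi_j$ with $\mathbf{v}_{opt}$ held fixed, since that is the reading under which the statement is meaningful.
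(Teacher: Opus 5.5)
Your proof is correct. Its core idea matches the paper's: vary only the $j$-th phase and contradict the optimality of $\mathbf{v}_{opt}$. The difference lies only in where you expand.

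The paper's one-line argument ("adjust $\psi_j$ to increase the real part, thereby increasing the norm") implicitly expands around $\mathbf{v}_{opt}$ itself. Set $\Delta=(e^{i\tilde\psi_j}-e^{i\psi_j})\mathbf{v}_j$, so that $\mathbf{v}_{opt}+\Delta=\mathbf{w}+e^{i\tilde\psi_j}\mathbf{v}_j$ is admissible. Then
\[
|\mathbf{v}_{opt}+\Delta|^2=|\mathbf{v}_{opt}|^2+2\left(\Re\langle\mathbf{v}_{opt},e^{i\tilde\psi_j}\mathbf{v}_j\rangle-\Re\langle\mathbf{v}_{opt},e^{i\psi_j}\mathbf{v}_j\rangle\right)+|\Delta|^2 .
\]
Since $|\Delta|^2\ge 0$, any phase that raises the real part strictly raises the norm. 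So the obstacle you flag, that changing $\psi_j$ also moves $\mathbf{v}_{opt}$, is handled in one line by this nonnegative quadratic remainder. You need neither the auxiliary vector $\mathbf{w}$ nor the two-cosine bridge.

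Your route expands around $\mathbf{w}$, where the quadratic term $|\mathbf{v}_j|^2$ does not depend on $\tilde\psi_j$. It costs an extra step but yields the explicit identity $\langle\mathbf{v}_{opt},\mathbf{v}_j\rangle=(\rho+|\mathbf{v}_j|^2)e^{i\psi_j}$. This shows at once that $\psi_j=\arg\langle\mathbf{v}_{opt},\mathbf{v}_j\rangle$ and that $|\langle\mathbf{v}_{opt},\mathbf{v}_j\rangle|\ge|\mathbf{v}_j|^2>0$. Those are exactly the facts $\gamma_k>0$ and $\alpha_k=e^{i\beta_k}$ used in the paper's subsequent ``Calculating the coefficients'' step. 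The direct expansion also gives them: for $\tilde\psi_j\not\equiv\psi_j$ one has $|\Delta|^2>0$, so the maximizer is unique. Your identity just makes this immediate.
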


\begin{proof}
If this were not true for some $j$, we could increase the real part of the inner product by adjusting the phase $\psi_j$, thereby increasing the norm of the sum—again contradicting optimality.
\end{proof}

\textbf{Calculating the coefficients}: 

Let $\langle \mathbf{v}_{opt}, \mathbf{v}_k \rangle = \gamma_k e^{i \beta_k}$, with $\gamma_k >0$.

Then, from the last claim, the corresponding optimal coefficient is: $$\alpha_k = e^{i\beta_k} = \frac{\langle \mathbf{v}_{opt}, \mathbf{v}_k \rangle}{|\langle \mathbf{v}_{opt}, \mathbf{v}_k \rangle|}.$$

Thus, the optimal vector satisfies the implicit equation:

\begin{align*}
    \mathbf{v}_{opt} = \sum_{j=1}^{n} \frac{\left< \mathbf{v}_{opt}, \mathbf{v}_j \right>}{\left| \left< \mathbf{v}_{opt}, \mathbf{v}_j \right> \right|} \mathbf{v}_j.
\end{align*} 

This equation can be solved numerically by searching over all unit vectors $\hat{\mathbf{v}}\in \mathbb{C}^2$ and evaluating the right-hand side. The direction that maximizes the resulting magnitude corresponds to $\mathbf{v}_{opt}$.

To implement the numerical method, we must assume a unit vector in complex two-dimensional space. A general unit vector in $\mathbb{C}^2$ can be given as: $$\mathbf{v} = \begin{pmatrix} e^{i \tilde{\theta}_2}\cos{\theta} \\ e^{i \tilde{\theta}_1} \sin{\theta} \end{pmatrix},$$ where $\theta \in [0,\pi/2]$ and  $\tilde{\theta_1},\tilde{\theta_2}\in[0,2\pi]$. However, the solution to the optimization problem is not unique: if $\mathbf{v}_{opt}$ is a solution, then so is $e^{i \tilde{\theta}} \mathbf{v}_{opt}$ for any global phase $\tilde{\theta}\in \mathbb{R}$. To reduce the computational cost, we can fix the global phase without loss of generality. This allow us to represent the unit vector more simply as $$\mathbf{v} = \begin{pmatrix} \cos{\theta} \\ e^{i \theta_1} \sin{\theta} \end{pmatrix},$$ with $\theta \in [0,\pi/2]$ and  $\theta_1\in[0,2\pi]$. This simplification significantly reduces the dimensionality of the search space during numerical optimization.

\textbf{Special case of real vectors : }  If all vectors $\left( \mathbf{v}_j \right)_{j=1}^{n}$ are real, the optimal coefficients simplify considerably. In this case:$$\frac{\left< \mathbf{v}_{opt}, \mathbf{v}_j \right>}{\left| \left< \mathbf{v}_{opt}, \mathbf{v}_j \right> \right|} = \text{sgn} \left( \left< \mathbf{v}_{opt}, \mathbf{v}_j \right> \right) = \text{sgn} \left( \cos(\theta_j) \right), $$ where $\theta_j$ is the angle between $\mathbf{v}_{opt}$ and $\mathbf{v}_j$. Thus, each coefficient is either $+1$ or $-1$, depending on whether the angle is acute or obtuse. This result aligns with the geometric intuition: vectors forming acute angles with the resultant vector contribute constructively (positive sign), while those forming obtuse angles contribute destructively (negative sign).

\end{document}